\documentclass[aps,amsmath,onecolumn,amssymb,groupedaddress,showpacs,nofootinbib,notitlepage,longbibliography]{revtex4-1}

\usepackage[utf8]{inputenc} 
\usepackage{amsmath}
\usepackage{amsfonts}
\usepackage{amsthm}

\usepackage{graphicx}
\usepackage{booktabs} 
\usepackage{array} 
\usepackage{paralist} 
\usepackage{verbatim} 
\usepackage{algorithmic}
\usepackage{algorithm}
\usepackage{color}
\usepackage{hyperref}
\usepackage{qcircuit}



\theoremstyle{plain}
\newtheorem{theorem}{Theorem}[section]
\newtheorem{lemma}[theorem]{Lemma}
\newtheorem{proposition}[theorem]{Proposition}

\theoremstyle{definition}
\newtheorem{definition}[theorem]{Definition}
\newtheorem{example}[theorem]{Example}

\newcommand{\ket}[1]{|#1\rangle}

\newcommand{\F}{\mathbb{Z}_2}
\newcommand{\Z}{\mathbb{Z}}
\newcommand{\x}{\mathbf{x}}
\newcommand{\y}{\mathbf{y}}
\newcommand{\z}{\mathbf{z}}
\newcommand{\w}{\mathbf{w}}
\renewcommand{\a}{\mathbf{a}}
\renewcommand{\b}{\mathbf{b}}
\renewcommand{\c}{\mathbf{c}}
\newcommand{\f}{\mathbf{f}}
\renewcommand{\P}{\mathcal{P}}
\newcommand{\C}{\mathcal{C}_n}
\newcommand{\R}{\mathcal{RM}}
\newcommand{\B}{\mathcal{B}}

\newcommand{\CNOT}{\mathrm{CNOT}}
\newcommand{\cnot}{\mathrm{CNOT}}
\newcommand{\wt}[1]{\lvert#1\rvert}
\newcommand{\res}{\mathrm{Res}}

\renewcommand{\deg}{\mathrm{deg}}
\newcommand{\diag}{\mathrm{diag}}
\newcommand{\0}{\mathbf{0}}
\renewcommand{\1}{\mathbf{1}}

\newcommand{\fvar}{X}
\newcommand{\dist}[2]{\delta(#1, #2)}

\newcommand{\etal}{{\it et al. }}


\newcommand{\ev}[1]{\mathbf{#1}}

\newcommand{\FF}[1]{\F^{#1}\setminus\{\0\}}
\newcommand{\FO}[1]{\F^{#1}\setminus\{\1\}}

\newcommand{\poly}{P}
\newcommand{\mono}[3]{#1_1^{#2_1}#1_2^{#2_2}\cdots#1_#3^{#2_#3}}
\newcommand{\xor}[3]{#1_1#2_1\oplus#1_2#2_2\oplus\cdots\oplus#1_#3#2_#3}


\AtBeginDocument{
\heavyrulewidth=.08em
\lightrulewidth=.05em
\cmidrulewidth=.03em
\belowrulesep=.65ex
\belowbottomsep=0pt
\aboverulesep=.4ex
\abovetopsep=0pt
\cmidrulesep=\doublerulesep
\cmidrulekern=.5em
\defaultaddspace=.5em
}

\begin{document}

\title{T-count optimization and Reed-Muller codes}

\author{Matthew Amy}
\email{meamy@uwaterloo.ca}
\affiliation{Institute for Quantum Computing, University of Waterloo, Waterloo, ON, Canada}
\affiliation{David R. Cheriton School of Computer Science, University of Waterloo, Waterloo, ON, Canada}

\author{Michele Mosca}
\email{michele.mosca@uwaterloo.ca}
\affiliation{Institute for Quantum Computing, University of Waterloo, Waterloo, ON, Canada}
\affiliation{Department of Combinatorics \& Optimization, University of Waterloo, Waterloo, ON, Canada}
\affiliation{Perimeter Institute for Theoretical Physics, Waterloo, ON, Canada}
\affiliation{Canadian Institute for Advanced Research, Toronto, ON, Canada}

\begin{abstract}
In this paper, we study the close relationship between Reed-Muller codes and single-qubit phase gates from the perspective of $T$-count optimization. We prove that minimizing the number of $T$ gates in an $n$-qubit quantum circuit over $\CNOT$ and $T$, together with the Clifford group powers of $T$, corresponds to finding a minimum distance decoding of a length $2^n-1$ binary vector in the order $n-4$ punctured Reed-Muller code. Moreover, we show that the problems are polynomially equivalent in the length of the code. As a consequence, we derive an algorithm for the optimization of $T$-count in quantum circuits based on Reed-Muller decoders, along with a new upper bound of $O(n^2)$ on the number of $T$ gates required to implement an $n$-qubit unitary over $\CNOT$ and $T$ gates. We further generalize this result to show that minimizing small angle rotations corresponds to decoding lower order binary Reed-Muller codes. In particular, we show that minimizing the number of $R_z(2\pi/d)$ gates for any integer $d$ is equivalent to minimum distance decoding in $\R(n - k - 1, n)^*$, where $k$ is the highest power of $2$ dividing $d$.
\end{abstract}

\maketitle

\section{Introduction}

The synthesis and optimization of quantum circuits has generated a great deal of interest in recent years. As qubit technologies become more stable and experimentalists increase the size of their systems, actually running algorithms on these machines becomes a practical concern. Moreover, we want to know how to \emph{efficiently} run these algorithms on the given systems, or conversely how big and stable of a system we need to run a particular algorithm. Given the prevalence of the circuit model within quantum computing, quantum circuit optimization is an important tool in answering these questions.

Due to the great affect of noise on quantum computations, much research has shifted its focus from optimizing physical circuits to logical ones with respect to a fault-tolerance schemes meant to mitigate the errors due to this noise. These schemes usually have striking differences from physical gates in terms of resource costs. In particular, most of the common schemes implement Clifford group gates \emph{transversally} -- that is, by performing one physical gate on each physical qubit or group of qubits. This allows the logical operation to be performed precisely and with time proportional to the physical gate time. The additional operation needed to make a universal gate set is then typically implemented probabilistically with state distillation and gate teleportation, a less accurate procedure which requires both additional time and space compared to a single physical gate. The two qubit controlled-NOT ($\cnot$) gate, as an element of the Clifford group, is hence a relatively cheap operation in this paradigm, compared to the $T=\diag(1, e^{i\frac\pi4})$ gate which is commonly chosen as the non-Clifford gate. This is a reversal of the computational costs inherent in most physical implementations, where entangling gates are typically more difficult to implement than single qubit rotations, and hence requires different circuit optimizations. While alternative fault-tolerance methods such as Paetznick and Reichardt's completely transversal Clifford+$T$ scheme \cite{pr13} and anyonic quantum computing \cite{k03} are gaining in popularity, minimization of the number of $T$ gates -- called the $T$-count -- in quantum circuits remains an important and widely studied goal.

We build on previous work by Amy, Maslov and Mosca on the reduction of $T$ gates in quantum circuits. In \cite{ammr13} it was shown that unitaries implementable over $\CNOT$ and $T$ gates may be described as a (linear) permutation together with a phase rotation that is an $8$th root of unity given by a pseudo-Boolean function of the input bits in the computational basis. This function, called the circuit's \emph{phase polynomial}, was shown to be expressible as a weighted sum of linear Boolean functions, each function corresponding to the application of a $T$ gate to a power given by its weight. This idea was later used in \cite{amm14} to optimize both $T$-count and $T$-depth -- the minimal number of stages of parallel $T$-gates in a circuit -- by computing a circuit's phase polynomial, simplifying it, then synthesizing a new circuit from the polynomial with maximally parallelized $T$ gates. While their benchmarks showed significant reduction of $T$ gates, it was noted that this approach was not optimal, as it was shown that there exist distinct phase polynomials that give rise to the same unitary. In particular, it was observed that for all $x_1,x_2,x_3,x_4\in\F$,
\[e^{i\frac\pi4\sum_{f\in\F^4\rightarrow\F}f(x_1,x_2,x_3,x_4)}=1=e^0,\] where $\F^n\rightarrow\F$ is the space of all $n$-bit linear Boolean functions. It was left as an open question as to whether there exist other such identities, and whether such identities can be used to further reduce instances of $T$ gates.

In this paper we fully characterize the set of identities between phase polynomials on $n$ qubits. In doing so, we find that the set of identities on $n$ qubits that are useful for reducing a circuit's $T$-count correspond exactly to the code-words of the length $2^n-1$ punctured Reed-Muller code of order $n-4$. This allows us to derive a new $T$-count optimization algorithm based on Reed-Muller decoding which is optimal for $\cnot$ and $T$ gate circuits when a minimum distance decoder is used. We implemented this optimization algorithm as a module in the quantum circuit optimizer $T$-par \cite{tpar} and tested it on general Clifford+$T$ circuits with two different Reed-Muller decoders. The results show modest reductions in $T$-count while still remaining tractable for circuits of non-trivial size, further confirming the efficacy of the (polynomial-time) $T$-par algorithm \cite{amm14} in terms of $T$-count optimization. Our result further provides a new quadratic upper bound on the number of $T$ gates required to implement a circuit over $\{\cnot, T\}$, along with evidence towards the intractability of exact $T$-count minimization via a polynomial-time equivalence to the minimum distance decoding problem for the punctured Reed-Muller code.

Our proof naturally generalizes to the case when the $T$ gate is replaced with a $Z$ rotation by any angle of the form $2\pi/2^k$. These gate sets are closely related to the Clifford-cyclotomic gate sets studied in \cite{fgkm15}, and are widely used in quantum algorithms including Shor's algorithm \cite{s94}. We show that minimizing the number of $2\pi/2^k$ rotation gates for each value of $k$ corresponds to decoding punctured Reed-Muller codes of order $n-k-1$, opening up the possibility of optimizing such circuits at the high level before decomposing them into a lower level gate set such as Clifford+$T$. We further show that these are the \emph{only} non-trivial identities between phase polynomials over arbitrary angles -- in particular, minimizing rotation gates of any composite order $2^x3^y5^z\cdots$ reduces to the case of order $2^x$.

\subsection{Related work}
Much work has gone into $T$-count and depth reduction in recent years.
Amy \etal \cite{ammr13} identified the $T$-count and $T$-depth as important quantities in the efficiency of a logical quantum circuit, and gave new implementations of 2--4 bit quantum operations reducing $T$-count and depth from the previously best known. Their search-based algorithm was later extended by Gosset \etal \cite{gkmr14} to directly optimize $T$-depth, leading to proofs of $T$-depth minimality for various 2--4 bit circuits. Selinger \cite{s13} showed that the Toffoli gate, as well as a general class of Clifford+$T$ circuits, can be parallelized to $T$-depth 1 with sufficiently many ancillas. Constructions for adding controls to quantum gates were also given which lowered the $T$-count and depth compared to best known practices using Toffoli gates. Amy, Maslov and Mosca later used similar ideas to create an automated, polynomial-time tool for reducing and parallelizing $T$ gates called $T$-par, which uses matroid partitioning to parallelize the $T$ gates. More recently, Abdessaied, Soeken and Drechsler \cite{asd14} studied the effect of Hadamard gates on $T$-count and depth reductions, developing a tool that reduces Hadamard gates in quantum circuits leading to further $T$ gate optimizations. Maslov \cite{m15} examined Toffoli gate implementations up to relative phase and used them to develop new designs for multiple control Toffolis using fewer ancillas, $\cnot$, and in some cases $T$ gates, than standard designs. 

A great deal of work optimizing $T$-count and depth in single qubit circuits has also been done recently, with series of works on exact \cite{kmm12} and approximate \cite{kmm13, s15, rs14} minimal synthesis, as well as repeat-until-success circuits \cite{ps14, brs15}. While we instead focus on \emph{multi-qubit} circuit optimization, the single- and multi-qubit approaches are complementary as circuits may be first optimized at the level of abstract, small angle rotations before optimally decomposing such gates into sequences of $T$ and Clifford group gates. 

The relationship between Reed-Muller codes and $T$ gates has previously been studied from the perspective of fault-tolerance, with applications to the construction of quantum error correcting codes with transversal roots of $Z$ \cite{klz96, zcc11, aj14} or otherwise implementing such gates with magic state distillation \cite{bh12, ehd12, lc13}. Our work differs from the work done in the fault-tolerance community in that we are interested in the optimization of quantum circuits, rather than implementing phase gates fault-tolerantly -- hence we establish \emph{completeness} results in addition to the \emph{existence} results found in fault-tolerance research. 


\subsection{Overview}
The rest of the paper is organized as follows. Section~\ref{sec:prelim} gives definitions and notation that will be used throughout the paper. Section~\ref{sec:phase} defines the linear phase operators, details their representation as weighted sums of linear Boolean functions and synthesis. Section~\ref{sec:topt} defines an additive subgroup of $\Z_8^{2^n-1}$ whose cosets correspond to the unique linear phase operators, then characterizes its binary residue as a Reed-Muller code and gives applications. Section~\ref{sec:generalization} generalizes the result to circuits over $\cnot$ gates and phase rotations with angles of the form $2\pi/d$, and Section~\ref{sec:experiments} details the experimental evaluation of our technique.

\section{Preliminaries}\label{sec:prelim}

We assume some knowledge of quantum computing and coding theory, but provide the basic necessary definitions from both. For a complete introduction to quantum computing, the reader is referred to Nielsen \& Chuang \cite{nc00}, and for background on coding theory see MacWilliams \& Sloane \cite{ms78}.

\subsection{Quantum circuits}

We work in the circuit model of quantum computation \cite{nc00}. The state of an $n$-qubit quantum system is modelled as a unit vector in a dimension $2^n$ complex vector space. As is standard we denote the $2^n$ basis vectors of the \emph{computational} basis by $\ket{\x}$ for bit strings $\x=x_1x_2\cdots x_n\in\F^n$ -- these are called the \emph{classical} states. We denote binary vectors by boldface letters and use them interchangeably as bit strings. A general quantum state may be written as a \emph{superposition} of classical states
\[
  \ket{\psi} = \sum_{\x\in\F^n} \alpha_{\x}\ket{\x},
\]
for complex $\alpha_{\x}$ and having unit norm.

Quantum gates, analogous to classical gates, correspond to unitary matrices on some $2^m$ dimensional complex vector space. An $m$-qubit gate may be lifted to a gate on some $m$-qubit subset of an $n$-qubit system by taking its tensor product with the identity matrix on the unaffected qubits. By a quantum circuit over a particular set of gates we mean a sequential list of gates taken from the set, each with a list of qubits the gate is to be applied to. Such a circuit \emph{implements} a unitary operator on $n$ qubits, defined as the (sequential) product of each gate appropriately lifted to $n$ qubits. In this way, two distinct circuits may implement the same unitary matrix -- we call such circuits \emph{equivalent}. 

In this paper we will primarily be interested in two gates: the controlled-NOT ($\cnot:\ket{x}\ket{y}\mapsto\ket{x}\ket{x\oplus y}$ where $\oplus$ denotes addition in $\F$) and the $T$-gate ($T:\ket{x}\mapsto e^{i\frac\pi4x}\ket{x}$). These two gates, together with $S:=T^2$ and $Z:=T^4$ gates, comprise what we refer to for brevity as the $\{\cnot, T\}$ gate set. We include the $S$ and $Z$ gates in this set to distinguish them from sequences of $T$ gates which are generally much more expensive to implement in most fault-tolerance schemes. Given any power $k\in\Z_8$ of the $T$ gate, we define a minimal $T$-gate expansion by $$T^k:=Z^{k_2}S^{k_1}T^{k_0}$$ where $k_2k_1k_0$ is the binary expansion of $k$. Note that $T^8=I$ so $T^k=T^{k\mod 8}$ for all integers $k$.

The problem of optimizing quantum circuits is to find, given a circuit, an equivalent circuit minimizing some cost function. In cases where the cost function assigns some non-zero cost to a particular gate $U$ while all other gates are free we refer to the resulting optimization problem as the \emph{$U$-gate minimization} problem. In this work we primarily consider the problem of $T$-gate minimization over the $\{\CNOT, T\}$ gate set. It should be noted that, while the $\{\cnot, T\}$ gate set is not universal in the sense that not every $n$-qubit unitary can be implemented to arbitrary accuracy with a polylogarithmic number of $\cnot$ and $T$ gates, the addition of the Hadamard gate ($H:\ket{x}\mapsto\frac1{\sqrt{2}}\sum_{x'\in\F}(-1)^{xx'}\ket{x'}$) gives a universal set known as Clifford+$T$.

\subsection{Coding theory}

A length $n$ \emph{binary linear code} is a subspace $C$ of $\F^n$, where $\F$ is the unique $2$-element field $(\{0,1\},\oplus, \cdot)$ with addition ($\oplus$) and multiplication ($\cdot$) modulo $2$. The elements of $C$ are called the \emph{codewords} of $C$. Note that $\F$ is the set of Boolean values with addition corresponding to exclusive-OR and multiplication corresponding to AND. Addition and multiplication are extended to vectors component-wise -- that is, $\x\y$ is the component-wise multiple of vectors $\x$ and $\y$, as opposed to matrix multiplication.

We denote binary vectors by boldface letters e.g., $\x=x_1x_2\cdots x_n\in\F^n$, and use them interchangeably as bit strings. In particular, we denote the $n$-qubit computational basis vectors by $\ket{\x}$ where $\x$ is a binary vector/bit string. The \emph{(Hamming) weight} of a binary vector, denoted $\wt{\x}, \x\in\F^n$, is defined as the number of non-zero entries it contains, and the \emph{(Hamming) distance} between two binary vectors $\x,\y\in\F^n$ is the weight of their sum: $$\dist{\x}{\y}:=\wt{\x\oplus\y}.$$

Given a received vector $\x\oplus\boldsymbol{e}$ where $\x\in C$ and $\boldsymbol{e}\in\F^n$ is some error vector, we wish to find $\x$ -- this process is known as \emph{decoding}. In this work, we are only concerned with \emph{minimum distance decoding}, as it relates directly to $T$-count optimization.
\begin{definition}
Given a binary linear code $C$ and vector $\x\in\F^n$, a \emph{minimum distance decoding} of $\x$ in $C$ is a codeword $\y\in C$ such that for all $\z\in C$, $\dist{\x}{\y}\leq \dist{\x}{\z}$.
\end{definition}

The problem of finding a minimum distance decoding is closely related to the more general \emph{closest vector problem} over a lattice, and in fact coincides with the closest vector problem over the lattice $C$ with the Hamming weight as the norm. Minimum distance decoding is commonly studied as it reasonably approximates maximum likelihood decoding when bit flip errors are independent of one another.

We give one more definition from coding theory which will be relevant to our work: the maximum distance of any vector from a codeword, called the \emph{covering radius}.

\begin{definition}
The \emph{covering radius} of a length $n$ binary code $C$ is $$\rho(C)=\max_{\x\in\F^n}\min_{\y\in C}\dist{\x}{\y}.$$
\end{definition}

\subsection{Reed-Muller codes}

Many different presentations of the binary Reed-Muller codes (\cite{m54, r54}) are known; we use a presentation based on multivariate polynomials as it will provide a convenient setting for our work. For more details the reader is referred to \cite{ms78}.

Let $\F[\fvar_1,\fvar_2,\dots, \fvar_n]$ be the ring of polynomials in $n$ variables over $\F$. We use the symbols $\fvar_1,\fvar_2,\dots, \fvar_n$ to denote formal variables so as to differentiate them from elements of binary vectors. Given $f\in\F[\fvar_1,\fvar_2,\dots, \fvar_n]$ we define the \emph{evaluation vector} of $f$, when viewed as an $n$-ary function, to be the length $2^n-1$ vector consisting of the evaluation of $f$ at all non-zero inputs ordered lexicographically -- i.e. $$(f(10\cdots 0), f(01\cdots 0), \dots, f(11\cdots 1)).$$ We denote the evaluation vector of a polynomial function $f$ by $\f$. Since $\fvar^2=\fvar$ for all $\fvar\in\F$, we work in the quotient ring $f\in\F[\fvar_1,\fvar_2,\dots, \fvar_n]/\langle \fvar_1^2 - \fvar_1, \dots, \fvar_n^2-\fvar\rangle$ and assume polynomials are in reduced form with exponents $0$ or $1$. Identifying the variable $\fvar_i$ with the Boolean function $f(\fvar_1,\fvar_2,\dots, \fvar_n)=\fvar_i$, we denote the evaluation vector of $\fvar_i$ by $\ev{\fvar}_i$. It can be easily verified that for any Boolean polynomial $f=\bigoplus_{\y\in\F^n}\fvar_1^{y_1}\fvar_2^{y_2}\cdots \fvar_n^{y_n}$, the evaluation vector of $f$ is equal to $\bigoplus_{\y\in\F^n}\ev{\fvar}_1^{y_1}\ev{\fvar}_2^{y_2}\cdots \ev{\fvar}_n^{y_n}$ -- again, exponentiation of a Boolean vector is defined as component-wise exponentiation.

We define the \emph{total degree} of a monomial $\mono{\fvar}{y}{n}$ to be the sum of its exponents: $$\deg(\mono{\fvar}{y}{n})=\sum_{i=1}^ny_i=\wt{\y}.$$ The degree of a polynomial function $f\in\F[\fvar_1,\fvar_2,\cdots \fvar_n]$, denoted $\deg(f)$, is defined as the maximum total degree of each monomial. Table~\ref{tab:evals} illustrates the evaluation vectors of the $2^n$ monomials on $n$ variables. Note that the set of non-constant monomial evaluation vectors are linearly independent and form a basis for the space $\F^{2^n-1}$.

\begin{table}
\caption{Evaluation vectors for monomials over $n$ Boolean variables.}\label{tab:evals}
\centering
\begin{tabular}{c|c c c c c c}
 & $100\cdots 0$ \quad & $010\cdots 0$ \quad& $110\cdots 0$ \quad& $001\cdots 0$ \quad& $\cdots$ \quad& $111\cdots 1$\quad \\ \hline
 $1$ & 1 & 1 & 1 & 1 & $\cdots$ & 1 \\
 $\fvar_1$ & 1 & 0 & 1 & 0 & $\cdots$ & 1 \\
 $\fvar_2$ & 0 & 1 & 1 & 0 & $\cdots$ & 1 \\
 $\fvar_1\fvar_2$ & 0 & 0 & 1 & 0 & $\cdots$ & 1 \\
  $\fvar_3$ & 0 & 0 & 0 & 1 & $\cdots$ & 1 \\
 $\vdots$ & $\vdots$ & $\vdots$ & $\vdots$ & $\vdots$ & $\ddots$ & $\vdots$ \\
 $\fvar_1\fvar_2\cdots \fvar_n$ & 0 & 0 & 0 & 0 & $\cdots$ & 1
\end{tabular}
\end{table}

\begin{definition}\label{def:rm}
The \emph{punctured Reed-Muller code} of order $r$ and length $2^n-1$, denoted $\R(r, n)^*$, is the set of evaluation vectors for polynomials $f\in\F[\fvar_1,\fvar_2,\dots, \fvar_n]$ of degree at most $r$.
\end{definition}

The non-punctured, length $2^n$ Reed-Muller code or order $r$ is defined in a similar fashion, using evaluation vectors consisting of all $2^n$ distinct evaluations for a given polynomial function instead.

%
%
%

\section{Linear phase operators}\label{sec:phase}
In this section we introduce linear phase operators as the subset of unitaries implementable by $\{\CNOT, T\}$ which require $T$ gates. We review their representation as pseudo-Boolean functions and define the canonical $T$-count for a particular polynomial. Finally we show that a minimal $T$-count implementation of a linear phase operator corresponds to a minimal weight vector of a vector space coset.

We define $\P_8(n)$ to be the set of diagonal $2^n\times 2^n$ unitaries implementable over $\{\CNOT, T\}$ -- we restrict our attention to this subset as any circuit over $\{\CNOT, T\}$ may be decomposed into a diagonal unitary followed by a permutation implementable using only $\CNOT$ gates. Amy \etal \cite[Lemma 2]{ammr13} showed that each such unitary $U\in\P_8(n)$ has the effect of applying a pseudo-Boolean function $\poly$ to a computational basis state $\ket{\x}$, viewed as a vector $\x=x_1x_2\cdots x_n\in\F^n$, and kicking the result into the phase: $$U:\ket{\x}\mapsto e^{i\frac{\pi}{4}\poly(\x)}\ket{\x}.$$ Moreover, it was shown that the \emph{phase polynomial} $P:\F^n\rightarrow\Z_8$ necessarily has a presentation as a weighted sum of (non-zero) linear Boolean functions:
\[\poly(\x) = \sum_{\y\in\FF{n}} a_\y(\xor{y}{x}{n}),\] where the coefficients $a_\y$ are integers modulo $8$. We call the tuple $\a=(a_1,a_2,\dots, a_{2^n-1})\in\Z_8^{2^n-1}$ an \emph{implementation} of $\poly$, and conversely denote the phase polynomial defined by an element $\a$ of $\Z_8^{2^n-1}$ by $\poly_\a$. As the function $\poly$ involves both $\F$ and $\Z_8$ arithmetic, we implicitly use the natural inclusion of $\F$ in $\Z_8$ to lift the binary valued result of $\xor{y}{x}{n}$ into an integer.

We call unitaries in $\P_8(n)$ $\pi/4$ \emph{linear phase operators}, as they may be expressed as a sequence of $\pi/4$ phase rotations conditioned on linear Boolean functions of the input basis state. We drop the $\pi/4$ until Section~\ref{sec:generalization} when we generalize the result to $2\pi/{2^k}$ linear phase operators. Given a particular phase polynomial $\poly$, we denote the linear phase operator with phase polynomial $\poly$ by $U_\poly$.

\begin{example}
The doubly-controlled $Z$ gate is a $\pi/4$ linear phase operator with phase function $\poly(x_1, x_2, x_3) = 4x_1x_2x_3$. Using the identity $2\cdot xy = x + y + 7(x\oplus y)\mod 8$ \cite{s13}, the phase function may be given as the following weighted sum of linear Boolean functions: $$\poly(x_1, x_2, x_3) = x_1 + x_2 + 7(x_1\oplus x_2) + x_3 + 7(x_1\oplus x_3) + 7(x_2\oplus x_3) + (x_1\oplus x_2\oplus x_3).$$ Writing the coefficients above as a $7$-tuple over $\Z_8$ we get $(1, 1, 7, 1, 7, 7, 1)$. Note that this implementation corresponds to the following circuit, taken from \cite{amm14}. The state of each qubit after an update is shown to illustrate the relation between the state of a qubit as a Boolean function of the inputs and the application of phase gates.
$$
\Qcircuit @C=1em @R=.7em {
\lstick{x_1} & \gate{T} & \targ & 
\ustick{\scriptstyle\:x_1\oplus x_3}\qw & \push{\rule{0em}{1em}}\qw & 
\gate{T^\dagger} & \targ & 
\ustick{\scriptstyle\;\;\;\;\;\;x_1\oplus x_2\oplus x_3} \qw & 
\push{\rule{0em}{1em}}\qw & \push{\rule{0em}{1em}}\qw & \push{\rule{0em}{1em}}\qw &
\gate{T} & \targ & 
\ustick{\scriptstyle\;\;\;x_1\oplus x_2} \qw & \push{\rule{0em}{1em}}\qw & \push{\rule{0em}{1em}}\qw &
\gate{T^\dagger} &
\targ & \ustick{\scriptstyle\!\!\!\!\!\!\!x_1}\qw & \rstick{\!\!\!\!x_1} \\
\lstick{x_2} & \gate{T} & \qw & 
\ctrl{1} & \push{\rule{0em}{1em}}\qw & 
\qw & \ctrl{-1} & 
\ctrl{1} & \qw & \push{\rule{0em}{1em}}\qw & \push{\rule{0em}{1em}}\qw &
\qw & \qw & 
\qw & \push{\rule{0em}{1em}}\qw &\push{\rule{0em}{1em}}\qw & \qw &
\ctrl{-1} & \qw & \rstick{\!\!\!\!x_2} \\
\lstick{x_3} & \qw & \ctrl{-2} & 
\targ & \ustick{\scriptstyle\:\:\:\:\:x_2\oplus x_3}\qw & 
\push{\rule{0em}{1em}}\qw & \gate{T^\dagger} & 
\targ & \ustick{\scriptstyle\!\!\!\!\!\!x_3}\qw & \push{\rule{0em}{1em}}\qw & \push{\rule{0em}{1em}}\qw &
\qw & \ctrl{-2} & \qw & \push{\rule{0em}{1em}}\qw &  \push{\rule{0em}{1em}}\qw &
\gate{T} & \qw & \qw & \rstick{\!\!\!\!x_3}
}
$$
\end{example}

Amy, Maslov and Mosca \cite{amm14} showed that a linear phase operator $U_\poly$ can be synthesized over $\{\cnot, T\}$ given an implementation $\a\in\Z_8^{2^n-1}$ of $\poly$ in time polynomial in the number of non-zero entries of $\a$ -- moreover, this number is linear in the size of the circuit. Their procedure applies each (non-trivial) phase shift $e^{i\frac\pi4 a_\y(\xor{y}{x}{n})}$ by first computing the linear sum $\xor{y}{x}{n}$, then applying $T^{a_\y}$ and uncomputing $\xor{y}{x}{n}$. Recall that $$T^k:=Z^{k_2}P^{k_1}T^{k_0}$$ where $k_2k_1k_0$ is the binary expansion of $k$. Since each $\xor{y}{x}{n}$ is a linear function of the basis state $x_1x_2\cdots x_n$, each value $\xor{y}{x}{n}$ may be computed solely with $\cnot$ gates, giving a total $T$-count equal to the number of odd elements of $\a$ -- we call this the $T$-count of an implementation. While in this work we are only concerned with the $T$-count of the synthesized circuit, $T$-depth can be minimized while keeping $T$-count the same by parallelizing this process through matroid partitioning \cite{amm14}.

The authors used this synthesis algorithm to optimize $T$-count in $\{\cnot, T\}$ circuits by first computing a set of coefficients for the associated linear phase operator $U_\poly$ from the circuit in polynomial time, then synthesizing an equivalent circuit. The remaining linear permutation is also computed and synthesized separately in polynomial time. This procedure has the crucial property that the element $\a$ of $\Z_8^{2^n-1}$ computed has $T$-count at most the $T$-count of the original circuit -- often much lower due to coefficients in the phase polynomial adding and reducing modulo $8$ -- hence the resulting circuit has equal or lesser $T$-count. In particular, we have the following proposition, which relates the $T$-count of a $\{\cnot, T\}$ circuit to the $T$-count of an implementation of the associated phase operator.

\begin{proposition}\label{prop:imp}
Let $U_\poly$ be a linear phase operator in $\P_8(n)$. There exists a circuit over $\{\cnot, T\}$ implementing $U_\poly$ with $T$-count $k$ if and only if there exists $\a\in\Z_8^{2^n-1}$ such that $\poly(\x)= \poly_\a(\x)$ for all $\x\in\F^n$, and $\a$ has at most $k$ odd entries.
\end{proposition}

\section{Decoding-based $T$-count optimization}\label{sec:topt}

While effective at reducing $T$-count, it was noted that the procedure in \cite{amm14} does not always find the minimal $T$-count, as the phase polynomial $\poly$ in question may have many different representations as a weighted sum of linear Boolean functions. For instance, $$4\cdot x_1 + 4\cdot x_2+4\cdot(x_1\oplus x_2)=0\mod 8$$ for all values of $x_1,x_2\in\F$, so $\poly(x_1,x_2)=4\cdot x_1 + 4\cdot x_2+4\cdot(x_1\oplus x_2)$ is an alternative presentation of the zero-everywhere ($\pi/4$) phase polynomial. This implies that further $T$-count optimization may be possible by first finding an implementation of the target phase polynomial with a minimal number of odd coefficients, then synthesizing a circuit. By Proposition~\ref{prop:imp}, this in fact gives a minimal $T$-count circuit. In this section we reduce this problem to a minimum-distance decoding problem and give a $T$-count optimization algorithm based on this decoding.

Given an element $\a$ of $\Z_8^{2^n-1}$, let $[\a]$ be the equivalence class of implementations of $\poly_\a$ -- i.e., $\b\in[\a]$ if and only if $\poly_\a(\x)=\poly_\b(\x)$ for all $\x\in\F^n$ (hence $U_{\poly_\a}=U_{\poly_\b}$). We define $\C$ to be the subset of $\Z_8^{2^n-1}$ giving the zero-everywhere phase polynomial. Note that for any $\a,\b\in\Z_8^{2^n-1}$ and $\x\in\F^n$, $$\poly_\a(\x)+\poly_\b(\x)=\poly_{\a+\b}(\x),$$ so $\C$ is in fact a subgroup of $\Z_8^{2^n-1}$ and moreover, $[\a]=\a+\C$. As a result we see that the problem of finding an implementation of $\poly_\a$ minimizing $T$ count is equivalent to finding an element $\c\in\C$ minimizing the number of odd entries in $\a+\c$. 

In order to find such elements of the coset $\a+\C$, we first need a characterization of the subgroup $\C$. The following lemma gives an explicit set of generators for $\C$ as scaled monomial evaluation vectors with \emph{effective degree} at most $n-4$, giving a type of generalized Reed-Muller code. As the proof is quite technical we give it in Appendix~\ref{sec:proofmain}

\begin{lemma}\label{lem:genset}
$\C$ is generated by $\{2^i\mono{\ev{\fvar}}{y}{n} \mid \y\in\F^n, i\in\Z_3, \wt{\y} - i \leq n-4\}$.
\end{lemma}

With the above set of generators optimization can be performed directly over $\C$, though the particular metric of $T$-count optimality makes such optimization unnatural. As the number of odd entries in an element of $\Z_8^{2^n-1}$ doesn't form a norm, there does not appear to be a natural reduction to lattice problems. Likewise, the number of odd coefficients doesn't make a natural distance metric for (ring) linear codes.

We can instead reduce the optimization problem to a decoding problem over a \emph{binary} code where minimum-distance decoding corresponds exactly to $T$-count optimization. Defining $\res_2:\Z\rightarrow\F$ as the function taking the binary residue of an integer and extending this component-wise to tuples, we see that the number of odd entries in $\a\in\Z_8^{2^n-1}$ is equal to the weight of the binary residue vector, i.e. $\wt{\res_2(\a)}.$ We can further see that $$\wt{\res_2(\a+\c)}=\wt{\res_2(\a)\oplus\res_2(\c)}=\dist{\res_2(\a)}{\res_2(\c)},$$ that, is the $T$-count of $\poly_{\a+\c}$ is the Hamming distance from $\res_2(\a)$ to $\res_2(\c)$. 

Hence, optimizing the number of odd entries in $\a + \c$ over all $\c\in\C$ is exactly the problem of minimum distance decoding $\res_2(\a)$ over $\res_2(\C)$, the set of binary residue vectors of $\C$. We further note that $\res_2(\C)$ is a binary linear code, since $\res_2(\a)\oplus\res_2(\b)=\res_2(\a+\b)\in\res_2(\C)$ for any $\a,\b\in\C$, and as a direct consequence of Lemma~\ref{lem:genset} this code is exactly the ($n-4$)th order, length $2^n-1$ punctured Reed-Muller code.

\begin{theorem}\label{thm:main}
$\res_2(\C)=\R(n-4, n)^*$
\end{theorem}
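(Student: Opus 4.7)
Plan for the proof of Theorem~\ref{thm:main}: derive an exact $\Z_8$-description of $\C$ via a Möbius-inversion argument, show that its mod-$2$ shadow places $\res_2(\C)$ inside a specific orthogonal complement $V^\perp$, and finally match $V^\perp$ with $\R(n-4, n)^*$ via an explicit construction together with a dual polynomial-sum argument.

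First I would use the integer XOR identity $\iota(z_1 \oplus \cdots \oplus z_m) = \sum_{\emptyset \neq S \subseteq [m]} (-2)^{|S|-1}\prod_{i \in S} z_i$ (valid for $z_i \in \{0,1\}$, obtained by expanding XOR via inclusion--exclusion on the carry bits) to rewrite
\[
\poly_\a(\x) \;=\; \sum_{\emptyset \neq S \subseteq [n]} (-2)^{|S|-1}\, A_S \prod_{i \in S} x_i, \qquad A_S \;:=\; \sum_{\substack{\y \in \FF{n} \\ \text{supp}(\y) \supseteq S}} a_\y,
\]
as a multilinear integer polynomial in $\x \in \{0,1\}^n$. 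By Möbius inversion on the Boolean lattice (over $\Z_8$), $\poly_\a \equiv 0 \pmod 8$ iff each coefficient is; since $(-2)^{|S|-1} \bmod 8$ is $1, -2, 4, 0, 0, \ldots$, membership in $\C$ reduces to the clean divisibility conditions $A_S \equiv 0 \pmod{2^{4-|S|}}$ for $|S| \in \{1, 2, 3\}$, with no constraint for $|S| \geq 4$. Reducing each of these congruences modulo $2$, and recognising the indicator of $\{\y \in \FF{n} : \text{supp}(\y) \supseteq S\}$ as the evaluation vector $\ev{\prod_{i \in S} v_i}$, any $\b = \res_2(\a)$ with $\a \in \C$ must satisfy $\b \perp \ev{\prod_{i \in S} v_i}$ for every $1 \leq |S| \leq 3$; writing $V$ for the $\F$-span of these evaluation vectors, this gives $\res_2(\C) \subseteq V^\perp$.

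Next I would exhibit an explicit preimage in $\C$ for each element of a spanning set of $\R(n-4, n)^*$. For a monomial $m = \prod_{i \in T} v_i$ with $s := |T| \leq n-4$, set $\a^{(m)}_\y := m(\y) \in \{0,1\} \subset \Z_8$, so that $\res_2(\a^{(m)}) = \ev{m}$. Parametrising $\y$ by its coordinates outside $T$, a direct character-sum calculation yields
\[
\sum_{\y \in \FF{n}} m(\y)\, \iota(\y \cdot \x) \;=\; \begin{cases} 2^{n-s}\, \iota\!\big(\bigoplus_{i \in T} x_i\big), & x_j = 0 \text{ for all } j \notin T,\\ 2^{n-s-1}, & \text{otherwise,} \end{cases}
\]
both of which are divisible by $8$ precisely when $s \leq n-4$, so $\a^{(m)} \in \C$. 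Since $\R(n-4, n)^*$ is $\F$-spanned by such $\ev{m}$ and $\res_2(\C)$ is a subspace, this yields $\R(n-4, n)^* \subseteq \res_2(\C)$. To close the sandwich I would check $V^\perp \subseteq \R(n-4, n)^*$: write any $\b \in V^\perp$ as $\ev{g}$ for the unique $g \in \F[v_1,\ldots,v_n]$ of degree $\leq n-1$, and let $c_U \in \F$ be the coefficient of $\prod_{i \in U} v_i$ in $g$. Using the classical identity that $\sum_{\y \in \F^n} h(\y)$ equals the coefficient of $v_1 v_2 \cdots v_n$ in $h$, each orthogonality condition $\b \perp \ev{\prod_{i \in S} v_i}$ with $|S| \in \{1,2,3\}$ rewrites as $\sum_{U \supseteq [n] \setminus S} c_U = 0$; processing these constraints in order of decreasing $|U|$ and using $c_{[n]} = 0$ triangularly forces $c_U = 0$ for all $|U| \in \{n-3, n-2, n-1\}$, giving $\deg g \leq n - 4$ and hence $\b \in \R(n-4, n)^*$. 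The chain $\R(n-4, n)^* \subseteq \res_2(\C) \subseteq V^\perp \subseteq \R(n-4, n)^*$ collapses to the theorem.

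The main obstacle, I expect, is the first step -- extracting the exact mod-$2^{4-|S|}$ divisibility description of $\C$. The technical insight is that $\poly_\a$ is a multilinear integer polynomial whose $\Z_8$-coefficients are read off by Möbius inversion, turning the $2^n$ vanishing conditions into just $n + \binom{n}{2} + \binom{n}{3}$ parity checks on $\res_2(\a)$. Once this dictionary is in place, the remaining pieces (the explicit monomial construction and the triangular argument identifying $V^\perp$) use only standard Reed--Muller manipulations.
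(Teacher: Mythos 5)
Your proposal is correct, and it reaches the theorem by a genuinely different route than the paper. The paper expands the \emph{tuple} $\a$ over the monomial basis of $\Z_8^{2^n-1}$, evaluates $\poly_\a(\x)$ as a signed sum of Hamming weights of products $(\mono{v}{y}{n})(\xor{x}{v}{n})$ (which are powers of two determined by degree), introduces an ``effective degree'' for terms $2^i\cdot\mono{v}{y}{n}$, and then proves via an inclusion--exclusion parity argument that any tuple of effective degree exceeding $n-4$ fails to lie in $\C$; the reverse inclusion comes from the same weight computation (Lemma~\ref{lem:codebasis}). You instead expand the \emph{phase polynomial} $\poly_\a$ as a multilinear integer polynomial in the input bits $x_1,\dots,x_n$ via the identity $\iota(z_1\oplus\cdots\oplus z_m)=\sum_{\emptyset\neq S}(-2)^{|S|-1}\prod_{i\in S}z_i$, and use M\"obius inversion (valid over $\Z_8$ since the inversion coefficients are $\pm1$) to convert the $2^n$ vanishing conditions into the exact characterization $A_S\equiv 0\pmod{2^{4-|S|}}$ for $1\le|S|\le 3$. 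This buys you something the paper does not state explicitly: a complete, finitely-presented description of the subgroup $\C$ itself, not just of its binary residue. The mod-$2$ shadow of these conditions gives $\res_2(\C)\subseteq V^\perp$, your character-sum computation of $\sum_\y m(\y)\iota(\y\cdot\x)$ reproves Lemma~\ref{lem:codebasis} to get $\R(n-4,n)^*\subseteq\res_2(\C)$, and the triangular elimination (using that the coefficient of $v_1\cdots v_n$ in $g\cdot\prod_{i\in S}v_i$ is $\sum_{U\supseteq[n]\setminus S}c_U$) closes the sandwich $\R(n-4,n)^*\subseteq\res_2(\C)\subseteq V^\perp\subseteq\R(n-4,n)^*$. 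I verified each step: the XOR expansion, the zeta-transform coefficients $A_S$, the observation that $(-2)^{|S|-1}$ kills all constraints with $|S|\ge 4$, the balancedness argument in the character sum, and the downward induction on $|U|$ are all sound (your ``precisely when $s\le n-4$'' is a slight overstatement for the first branch of the character sum, but only the sufficient direction is used). The sandwich structure also neatly sidesteps the fact that reducing the divisibility conditions mod $2$ discards information, since equality is forced at the end. Both proofs are valid; yours trades the paper's combinatorial effective-degree machinery for a more algebraic coefficient-extraction argument, and arguably generalizes to Theorem~\ref{thm:main2} even more transparently, since replacing $8$ by $2^{k+1}$ simply changes the divisibility thresholds to $2^{k+1-|S|}$.
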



The remainder of this section discusses some consequences of Theorem~\ref{thm:main}.

\subsection{Upper bounds}

As a consequence of Proposition~\ref{prop:imp} and Theorem~\ref{thm:main}, the covering radius of $\res_2(\C)=\R(n-4,n)^*$ gives a tight upper bound on the number of $T$ gates required to implement a linear phase operator over $\{\cnot, T\}$. Here we mean tight in the sense that there exists a linear phase operator which requires a minimum of $\rho(\R(n-4, n)^*)$ $T$ gates to implement over $\{\cnot, T\}$. While to the best of the authors' knowledge no analytic formula has been found for the covering radius of higher-order Reed-Muller codes, some asymptotic upper bounds are known. In particular, Cohen and Litsyn \cite{cl92} showed that for large $n$ and orders $r$ where $n-r\geq 3$,
$$\rho(\R(r, n))\leq \frac{n^{n-r-2}}{(n-r-2)!}.$$

Since the covering radius of $\R(r, n)^*$ is trivially bounded above by $\rho(\R(r, n))$, we see that for sufficiently large $n$, $\rho(\R(n-4, n)^*)\leq \frac{n^2}{2}-1$. As a result we obtain a new asymptotic bound on the number of $T$ gates required to implement a circuit over $\{\cnot, T\}$.

\begin{theorem}
Any linear phase operator $U_p\in\P_8(n)$ can be implemented with $O(n^2)$ $T$-gates.
\end{theorem}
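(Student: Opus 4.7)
The plan is to assemble the ingredients that the paper has already laid out: Lemma~\ref{lem:imp} reduces implementing $U_p$ with $k$ $T$-gates to finding an implementation $\a\in\Z_8^{2^n-1}$ of $p$ with $T$-count (number of odd entries) at most $k$; the preceding discussion shows the minimum such $T$-count equals the Hamming distance from $\res_2(\a)$ to the binary code $\res_2(\C)$; and Theorem~\ref{thm:main} identifies $\res_2(\C)$ as $\R(n-4,n)^*$. Putting these together, the minimum $T$-count of any implementation of $U_p$ is at most the covering radius $\rho(\R(n-4,n)^*)$. So the entire task reduces to upper-bounding this covering radius by $O(n^2)$.

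Concretely, I would proceed as follows. First, fix any implementation $\a\in\Z_8^{2^n-1}$ of $p$, obtainable in polynomial time from any $\{\cnot,T\}$ circuit realizing $U_p$ (say, via the procedure of~\cite{amm14}). Second, invoke the coset description $[\a]=\a+\C$ and the fact that the residue map $\res_2$ is a ring homomorphism to rewrite the minimum $T$-count as $\min_{\c\in\C}\wt(\res_2(\a)\oplus\res_2(\c))$, which is at most $\rho(\res_2(\C))$. Third, apply Theorem~\ref{thm:main} to replace $\res_2(\C)$ by $\R(n-4,n)^*$.

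The final step is the numerical bound. I would reduce the punctured-code covering radius to the unpunctured one via the standard extension argument: given a received vector $\x'\in\F^{2^n-1}$, extend it to $\x\in\F^{2^n}$ by appending a $0$ in the punctured coordinate, pick a nearest codeword $\y\in\R(n-4,n)$ at distance $\leq\rho(\R(n-4,n))$, and puncture to get $\y'\in\R(n-4,n)^*$ with $d(\x',\y')\leq d(\x,\y)$. Hence $\rho(\R(n-4,n)^*)\leq\rho(\R(n-4,n))$. Then I would plug $r=n-4$ into the Cohen--Litsyn bound $\rho(\R(r,n))\leq n^{n-r-2}/(n-r-2)!$ cited above, which yields $\rho(\R(n-4,n))\leq n^2/2$ for all sufficiently large $n$. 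Combining everything gives a $\{\cnot,T\}$ implementation of $U_p$ using at most $n^2/2$ $T$-gates asymptotically, which is $O(n^2)$.

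There is no real obstacle here, since the theorem is essentially a corollary of Lemma~\ref{lem:imp}, Theorem~\ref{thm:main}, and the external covering-radius bound. The only mild subtlety is justifying $\rho(\R(n-4,n)^*)\leq\rho(\R(n-4,n))$, which the extension argument above handles cleanly; everything else is bookkeeping.
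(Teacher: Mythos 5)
Your proposal is correct and follows essentially the same route as the paper: reduce to bounding the covering radius of $\res_2(\C)=\R(n-4,n)^*$ via Lemma~\ref{lem:imp} and Theorem~\ref{thm:main}, bound the punctured code's covering radius by that of the unpunctured code, and apply the Cohen--Litsyn estimate with $r=n-4$. The only difference is that you spell out the extension/puncturing argument for $\rho(\R(n-4,n)^*)\leq\rho(\R(n-4,n))$, which the paper dismisses as trivial.
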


\subsection{$T$-count optimization}

While the minimal $T$-count of a given phase polynomial $\poly_\a$ can be obtained by finding a minimum distance decoding of $\res_2(\a)$ in $\R(n-4, n)^*$, the decoding itself isn't enough to synthesize a minimal $T$-count circuit. In particular, decoding the binary residue $\res_2(\a)$ of a target tuple $\a\in\Z_8^{2^n-1}$ over $\R(n-4, n)^*$ produces a minimal \emph{residue} $\w=\res_2(\c)$ of a codeword $\c$ in $\C$. To actually produce a minimal $T$-count implementation we need to compute $\c\in\C$ from $\res_2(\c)$ and then synthesize $\a+\c$. Fortunately, there is an easy way to do this, given the following lemma.

\begin{lemma}\label{lem:codebasis}
For all $\y\in\F^n$ with $\wt{\y}\leq n-4$ we have $\mono{\ev{\fvar}}{y}{n}\in\C$.
\end{lemma}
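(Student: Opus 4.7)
The plan is to verify $\poly_\a(\x) \equiv 0 \pmod 8$ for every $\x \in \F^n$, where $\a := \mono{\v}{y}{n}$ is the monomial's evaluation vector viewed as a $\{0,1\}$-tuple in $\Z_8^{2^n-1}$. First I would unfold the definition: letting $A = \{i : y_i = 1\}$, the coordinate $a_\z$ equals $1$ exactly when $z_i = 1$ for every $i \in A$, so
\[
\poly_\a(\x) \;\equiv\; \sum_{\z \in Z_A}\iota(\z\cdot\x) \pmod 8,
\]
where $Z_A := \{\z \in \F^n : z_i = 1 \text{ for all } i \in A\}$. (The $\z = \0$ contribution only arises when $A = \emptyset$ and is discarded since $\iota(0) = 0$.)

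Next I would reduce this to an exponential sum via the identity $\iota(b) = (1 - (-1)^b)/2$ for $b \in \{0,1\}$, yielding, as an identity over $\Z$,
\[
\poly_\a(\x) \;=\; \tfrac{1}{2}\Bigl(|Z_A| \;-\; \sum_{\z \in Z_A}(-1)^{\z\cdot\x}\Bigr).
\]
Each $\z \in Z_A$ decomposes uniquely as $\z = \1_A + \w$ with $\w$ supported on $[n]\setminus A$, so $\z\cdot\x = \1_A\cdot\x \oplus \w\cdot(\x|_{[n]\setminus A})$, and the character sum factors as $(-1)^{\1_A\cdot\x}$ times $\sum_{\w \in \F^{n-|A|}}(-1)^{\w\cdot(\x|_{[n]\setminus A})}$; the latter is a standard Fourier evaluation, equal to $2^{n-|A|}$ when $\x|_{[n]\setminus A} = \0$ and $0$ otherwise. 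Combined with $|Z_A| = 2^{n-|A|}$, this forces $\poly_\a(\x)$ to be an integer multiple of $2^{n-|A|-1}$ in every case. The hypothesis $|A| = \wt(\y) \leq n-4$ then gives divisibility by $2^3 = 8$.

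The main obstacle is that $\iota$ is not additive across $\oplus$, so one cannot distribute the outer sum through the XOR inside $\iota(\z\cdot\x)$ directly. The decisive move is to transport the computation to the multiplicative character $(-1)^{(\cdot)}$, which turns XOR into a product and collapses the whole question to a short exponential-sum evaluation. A messier but equivalent alternative is to expand $\iota$ modulo $8$ as a multilinear polynomial in $\x$ of degree at most $3$ using $(-1)^a = 1 - 2a$ and then verify that, after summing over $Z_A$, the coefficient of each monomial $\prod_{i\in S} x_i$ acquires a factor of $2^{n-|A|+|A\cap S|-1}$, which vanishes modulo $8$ under the hypothesis; this also exhibits tightness, since taking $S \cap A = \emptyset$ requires exactly $n - |A| \geq 4$.
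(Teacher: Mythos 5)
Your proof is correct and follows essentially the same route as the paper: both reduce $\poly_{\a}(\x)$ to the count $|\{\z : z_i=1 \text{ for all } i\in A,\ \z\cdot\x=1\}|$ and show it equals $0$, $2^{n-|A|}$, or $2^{n-|A|-1}$, hence is divisible by $8$ whenever $|A|=\wt(\y)\le n-4$. The only difference is cosmetic: you evaluate that count by a character (Fourier) sum via $\iota(b)=(1-(-1)^b)/2$, whereas the paper counts solutions directly through the degree of the product polynomial $(\mono{v}{y}{n})(\xor{x}{v}{n})$ in Lemmas~\ref{lem:eval} and~\ref{lem:weight}.
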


Using Lemma~\ref{lem:codebasis} which follows directly from Lemma~\ref{lem:genset} and the fact that 
set of all monomial evaluation vectors with degree at most $n-4$, $$\B=\{\mono{\ev{\fvar}}{y}{n}\mid\y\in\F^n, \wt{\y}\leq n-4\},$$ forms a basis for $\R(n-4, n)^*$, we can write a decoded word $\w$ over this basis then \emph{reinterpret} the sum over $\Z_8$. Specifically, if
$\w=\b_1\oplus \b_2\oplus\cdots\oplus \b_k$ for some $\b_1,\b_2,\dots, \b_k\in\mathcal{B}$, then we define $\c=\b_1+ \b_2+\cdots+ \b_k$, which by Lemma~\ref{lem:codebasis} is in $\C$, and further note that $$\res_2(\c)=\res_2(\b_1+ \b_2+\cdots+ \b_k)=\res_2(\b_1)\oplus\res_2(\b_2)\oplus\cdots\oplus\res_2(\b_k)=\w.$$ Using this fact we develop an algorithm for the optimization of $T$-count based on Reed-Muller decoding.

\begin{algorithm}[H]
\caption{$T$-optimize($C$)}
\label{alg:cnott}
\begin{algorithmic}[1]
	\STATE Compute phase polynomial coefficients $\a\in\Z_8^{2^n-1}$ from $C$
	\STATE $\w\gets$RM-DECODE($n-4$, $n$, $\res_2(\a)$)
	\STATE Write $\w$ over basis $\mathcal{B}$: $\w=\b_1\oplus \b_2\oplus\cdots\oplus \b_k$
	\STATE $\c\gets \b_1+ \b_2+\cdots+ \b_k$ (mod $8$)
	\STATE SYNTHESIZE($\a+\c$)
\end{algorithmic}
\end{algorithm}

Algorithm~\ref{alg:cnott} summarizes our algorithm for $T$-count optimization in $\{\cnot, T\}$ circuits. For simplicity the algorithm assumes the input circuit implements a linear phase operator -- for more general $\{\cnot, T\}$ circuits the extra linear permutation is synthesized and appended to the end. The algorithm works by computing a set of coefficients implementing the linear phase operator $U_\poly$ computed by the circuit. The vector of residues modulo $2$ is then decoded as $\w$ in $\R(n-4, n)^*$ using the procedure RM-DECODE($n-4$, $n$, $\res_2(\a)$). A vector $\c\in\C$ with binary residue equal to $\w$ is then computed and added to the original set of coefficients, and a circuit is synthesized for the new implementation of $\poly$. In particular, the procedure SYNTHESIZE takes a set of coefficients $\a$ and synthesizes a circuit over $\{\cnot, T\}$ implementing $U_{\poly_\a}$.

The $T$-optimize algorithm is parametric in both the decoder and the synthesis procedure, meaning any variable order Reed-Muller decoder may be used to implement RM-DECODE. If a minimum distance decoder is used, Algorithm~\ref{alg:cnott} synthesizes a minimal $T$-count circuit. Likewise, any synthesis procedure may be used to implement SYNTHESIZE -- for instance, the $T$-depth minimizing $T$-par algorithm \cite{amm14} can be used.

\begin{example}\label{ex:opt}
Consider the circuit below:

\vspace{0.5em}
\centerline{
\Qcircuit @C=0.3em @R=.9em {
	\lstick{x_1} & \ctrl{2}  & \qw          & \ctrl{3}  & \qw         & \rstick{x_1} \qw \\
	\lstick{x_2} & \ctrl{1}  & \gate{Z} & \ctrl{2}   & \qw         & \rstick{x_2} \qw \\
	\lstick{x_3} & \ctrl{-2} & \qw          & \qw         & \qw         & \rstick{x_3} \qw \\
	\lstick{x_4} & \qw         & \qw          & \ctrl{-3} & \gate{S} & \rstick{x_4} \qw
}\qquad
\raisebox{-2.8em}{\!$=$\!}\qquad
\Qcircuit @C=0.3em @R=.3em {
	\lstick{x_1} & \gate{T} & \targ & \qw & \ctrl{2} & \gate{T^\dagger} & \ctrl{1} & \qw & \qw & \ctrl{2} & \targ 
		& \qw          
		& \gate{T} & \targ & \qw & \ctrl{3} & \gate{T^\dagger} & \ctrl{1} & \qw & \qw & \ctrl{3} & \targ 
		& \qw         & \rstick{x_1} \qw \\
	\lstick{x_2} & \gate{T} & \ctrl{-1} & \targ & \qw & \gate{T^\dagger} & \targ & \gate{T^\dagger} & \targ & \qw & \ctrl{-1}
		& \gate{Z} 
		& \gate{T} & \ctrl{-1} & \targ & \qw & \gate{T^\dagger} & \targ & \gate{T^\dagger} & \targ & \qw & \ctrl{-1}  
		& \qw         & \rstick{x_2} \qw \\
	\lstick{x_3} & \gate{T} & \qw & \ctrl{-1} & \targ & \gate{T} & \qw & \qw & \ctrl{-1} & \targ & \qw
		& \qw          
		& \qw & \qw & \qw & \qw & \qw & \qw & \qw & \qw & \qw & \qw         
		& \qw         & \rstick{x_3} \qw \\
	\lstick{x_4} & \qw & \qw & \qw & \qw & \qw & \qw & \qw & \qw & \qw & \qw         
		& \qw          
		& \gate{T} & \qw & \ctrl{-2} & \targ & \gate{T} & \qw & \qw & \ctrl{-2} & \targ & \qw
		& \gate{S} & \rstick{x_4} \qw
}
}\vspace{1em}
By iterating through the circuit and updating the qubit states (see, e.g., \cite{amm14}), we compute the phase polynomial for this operator as
\begin{align*}
	\poly(\x) = 2x_1 &+ 6x_2 + 6(x_1\oplus x_2) + x_3 + 7(x_1\oplus x_3) + 7(x_2\oplus x_3) + (x_1\oplus x_2\oplus x_3) \\
  		&+ 3x_4 + 7(x_1\oplus x_4) + 7(x_2\oplus x_4) + (x_1\oplus x_2\oplus x_4).
\end{align*}
Writing the coefficients of $\poly$ as a $2^n-1$-tuple $\a$ we get $$\a=(2, 6, 6, 1, 7, 7, 1, 3, 7, 7, 1, 0, 0, 0, 0),$$ which has a canonical $T$-count of $8$ -- a reduction of $6$ $T$ gates.

Now we optimize the implementation of $\poly$ further by decoding $$\res_2(\a)=(0, 0, 0, 1, 1, 1, 1, 1, 1, 1, 1, 0, 0, 0, 0)$$ in the code $\R(0, 4)^*$. As $\R(0, 4)^*$ is the set of evaluation vectors for degree 0 binary polynomials, there are exactly two vectors to choose from, corresponding to the zero (zero-everywhere) and constant (one-everywhere) functions. Since the all $1$ vector achieves the minimum distance of $7$ from $\res_2(\a)$, we choose $\w$ to be the all $1$ vector. By Lemma~\ref{lem:codebasis}, $\w=\1$ (mod 2) is already in the space of zero-everywhere polynomials $\C$, so steps 3 \& 4 are trivial and we set $\c=1$ (mod 8). Finally we synthesize a circuit for the tuple $\a+\c = (3, 7, 7, 2, 0, 0, 2, 4, 0, 0, 2, 1, 1, 1, 1),$ corresponding to the phase polynomial 
\begin{align*}
	\poly'(\x) = 3x_1 &+ 7x_2 + 7(x_1\oplus x_2) + 2x_3 + 2(x_1\oplus x_2\oplus x_3) + 4x_4 + 2(x_1\oplus x_2\oplus x_4) \\
  		& +  (x_3\oplus x_4) + (x_1\oplus x_3\oplus x_4) + (x_2\oplus x_3\oplus x_4) + (x_1\oplus x_2\oplus x_3 \oplus x_4).
\end{align*}
A possible circuit implementing $\poly'$ is shown below:

\vspace{0.5em}
\centerline{
\Qcircuit @C=0.5em @R=.3em {
	\lstick{x_1} & \gate{T} & \gate{S}        & \ctrl{1} & \qw       & \targ & \gate{T}                & \targ       & \qw       & \gate{T}
		& \targ & \qw & \qw & \qw & \qw & \qw & \ctrl{1} &  \rstick{x_1} \qw \\
	\lstick{x_2} & \gate{T^\dagger} & \qw & \targ      & \ctrl{1} & \qw   & \gate{T^\dagger} & \ctrl{-1} & \ctrl{2} & \qw
		& \qw & \qw & \ctrl{2} & \qw & \ctrl{2} & \ctrl{1} & \targ &  \rstick{x_2} \qw \\
	\lstick{x_3} & \qw & \gate{S}                & \ctrl{1} & \targ      & \qw   & \gate{S}                & \qw        & \qw        & \qw
		& \qw & \ctrl{1} & \qw & \qw & \qw & \targ & \qw &  \rstick{x_3} \qw \\
	\lstick{x_4} & \qw & \gate{Z}                & \targ      & \qw       & \ctrl{-3} & \gate{T}          & \qw         & \targ     & \gate{T} 
		& \ctrl{-3} & \targ & \targ & \gate{S} & \targ & \qw & \qw &  \rstick{x_4} \qw
}
}\vspace{1em}

Note that this decoding reduces the $T$-count from $14$ (or $8$, as $T$-par type optimizations would obtain) to $7$. Moreover, the number of $T$ gates is equal to the distance from $\res_2(\a)$ to the decoded word $\w$.

\end{example}

It is interesting to note that the minimal $T$-depth of $U_{\poly'}$ above without additional ancillas is $3$, while the minimal ancilla-free $T$-depth of $U_\poly$ is $2$, even though the number of $T$ gates is reduced. Clearly Algorithm~\ref{alg:cnott}, when combined with a $T$-depth optimal synthesis method such as matroid partitioning, does not necessarily obtain the minimal $T$-depth for a given circuit. It remains an open question to determine an efficient method of optimizing $T$-depth over all implementations of a linear phase operator.

\subsection{Complexity}

It may be noted that Algorithm~\ref{alg:cnott} gives a polynomial-time (in $2^n$) reduction from $T$-count optimization over $\{\cnot, T\}$ to minimum-distance decoding in $\R(n-4, n)^*$. We may likewise reduce the minimum-distance decoding problem for $\R(n-4, n)^*$ to $T$-count optimization: given a binary vector $\w\in\F^{2^n-1}$, synthesize $U_{\poly_{\w}}$ over $\{\cnot, T\}$ then optimize the circuit and compute the coefficients $\a\in\Z_8^{2^n-1}$ for the optimized circuit. As a consequence of Theorem~\ref{thm:main}, the vector $\w\oplus\res_2(\a)$ is a minimum distance decoding of $\w$. Assuming the optimized circuit does not have exponentially more gates than a canonical circuit,\footnote{The canonical circuit for any linear phase operator uses $O(n2^{n-1})$ gates.} this reduction is also polynomial in the word length $2^n$, so we see that the problems are in fact polynomial-time equivalent.

This equivalence lends evidence to the difficulty of $T$-count optimization, even in the restricted setting of circuits over $\cnot$ and $T$ gates. In particular, any sub-exponential algorithm for exact optimization of $T$-count over $n$-qubit $\{\cnot, T\}$ circuits induces a polynomial-time minimum-distance decoding algorithm for $\R(n-4, n)$. This can be further reduced to a \emph{linear-time} algorithm by noting that the unitary $U_{\poly_\w}$ above can be implemented with $O(2^n)$ gates using one ancilla and the Gray code to cycle through each of the $2^n$ binary sums of $n$ variables with one $\cnot$ gate each.

In either case it appears very unlikely that an efficient algorithm for minimum-distance decoding the order $n-4$ punctured Reed-Muller code exists. No minimum distance decoding algorithms in time polynomial in $2^n$ \emph{or} the Hamming weight of the received word are currently known for arbitrary order length $2^n$ binary Reed-Muller codes. While some particular orders of Reed-Muller codes have efficient decoders, e.g., order 1, it was shown in \cite{sl83} that minimum-distance decoding for $\R(n-4, n)^*$  is equivalent to the problem of finding a minimal decomposition of a symmetric 3-tensor into symmetric tryads (rank 1 3-tensors), a known hard problem \cite{sl83}.

\section{Rotations of other orders}\label{sec:generalization}

Having shown that minimizing the number of $T$ gates in $\{\cnot, T\}$ circuits is equivalent to minimum distance decoding in $\R(n-4, n)^*$, we now turn our attention to circuits with $Z$-basis rotations of other angles. Specifically, we define the gate $R_z(2\pi/d)$ for any non-zero integer $d$ by $$R_z(2\pi/d):\ket{x}\mapsto e^{\frac{2i\pi}{d}x}\ket{x}.$$ Such gates arise, e.g., in Shor's algorithm \cite{s94} and the Clifford hierarchy \cite{gc99}. Moreover, researchers have recently developed state distillation techniques for these gates, allowing them to be performed fault tolerantly without approximating them over another gate set \cite{lc13, dp15}. Here we develop methods for the optimization of circuits over $\cnot$ and $\R_z(2\pi/d)$ gates to make use of this higher-level structure of many quantum circuits, whether the rotations are then to be approximated over another gate set or implemented directly. 

We define $\P_{d}(n)$ to be the set of $n$-qubit $2\pi/d$ linear phase operators -- that is, $n$-qubit diagonal unitary matrices implementable over $\{\cnot, R_z(2\pi/d)\}$. As in the case of $\pi/4$ linear phase operators, such an operator applies to each basis vector a phase rotation that is a $d$-th root of unity determined by a linear combination of linear functions of its bits. In particular, for any $U\in\P_{d}(n)$, $U$ has the following effect: $$U:\ket{\x}\mapsto e^{\frac{2i\pi}{d}\poly(\x)}\ket{\x},\qquad \poly(\x)=\sum_{\y\in\F^n\setminus\{0\}}a_\y(\xor{y}{x}{n})$$ for some coefficients $\a\in\Z_{d}^{2^n-1}$. As before we call the tuple $\a$ an implementation of $\poly$ and we denote the set of zero-everywhere phase polynomial implementations $\C^d$, defined below as $$\C^d=\{\c\in\Z_{d}^{2^n-1}|\forall\x\in\F^n, \poly_\c(\x) = 0\mod d\}.$$

We first consider the case when $d$ is even, which is a natural generalization of Theorem~\ref{thm:main}. We then examine the case when $d$ is an odd prime power, and finally combine the two results to get a characterization of $\P_d(n)$ for any non-zero integer $d$.

\subsection{Rotations of even order}

Recall that Theorem~\ref{thm:main} was proven by giving a set of generators for $\C=\C^{2^3}$. We can use the same methods to assign a set of generators to $\C^{2^k}$, and likewise derive a generalization of Theorem~\ref{thm:main}. In particular, it turns out that $\C^{2^k}$ is generated by the set of scaled monomial vectors of effective degree at most $n-k-1$.

\begin{lemma}\label{lem:general}
$\C^{2^k}$ is generated by $\{2^i\mono{\ev{\fvar}}{\y}{n} \mid \y\in\F^n, i\in\Z_3, \wt{\y} - i \leq n-k-1\}$.
\end{lemma}

Again the proof of Lemma~\ref{lem:general} is left for Appendix~\ref{sec:proofmain}. Further, as in the $\pi/4$ case, Lemma~\ref{lem:general} implies the following theorem stating that the binary residues of $\C^{2^k}$ are exactly the codewords of the order $n-k-1$ punctured Reed-Muller code of length $2^n-1$.

\begin{theorem}\label{thm:main2}
$\R(n-k-1, n)^*=\res_2(\C^{2^k})$\footnote{Note that using Definition~\ref{def:rm}, the Reed-Muller code $\R(r, n)$ is well defined for $r<0$. In particular, the code is the trivial code $\{\0\}$, corresponding to the fact that no non-trivial zero phase polynomials exist mod $2^k$ when $k<n-1$.}
\end{theorem}

As a consequence of Theorem~\ref{thm:main2}, Algorithm~\ref{alg:cnott} can be adapted to optimize the number of $R_z(2\pi/2^k)$ gates in a linear phase circuit. Recall that the canonical circuit for an implementation of a $\pi/4$ linear phase operator was defined by computing $\xor{y}{x}{n}$ for each nonzero $a_\y$, then applying a sequence of $T$, $P$ and $Z$ gates to achieve the correct power of $e^{i\frac\pi4}$. We may define the canonical circuit for an implementation of any $2\pi/2^k$ linear phase operator in the same way: compute $\xor{y}{x}{n}$ then apply $R_z(2\pi/2^k)^l$ to achieve the correct power of $e^{i\frac\pi{2^k}}$. Under the assumption that $R(2\pi/2^k)$ gates are more expensive than $R(2\pi/2^{k'})$ gates whenever $k> k'$, we define $$R_z(2\pi/2^k)^l:=R_z(2\pi)^{l_k}\cdots R_z(2\pi/2^{l-1})^{l_1}R_z(2\pi/2^k)^{l_0}$$ where $l_k\cdots l_1l_0$ is the binary expansion of $l$. Denoting by $\a>\!\!\!>\!\!\!>i$ the component-wise quotient of $\a$ divided by $2^i$, we find that the number of $R_z(2\pi/2^l)$ gates in the canonical circuit is $\wt{\res_2(\a>\!\!\!>\!\!\!>(k-l))}$ -- the number of components $a_\y$ that have a $1$ in the ($k-l$)th digit of their binary expansion.

The number of rotation gates of any angle $2\pi/2^l$ for $l\leq k$ may then be reduced by decoding $\res_2(\a>\!\!\!>\!\!\!> (k-l))$ in the code $\res_2(\C^l)=\R(n-l-1, n)$ and adding the decoded tuple back into $\a$ (multiplied by the appropriate power of $2$). Such procedures may be a valuable tool for quantum circuits utilizing progressively finer grain $Z$ rotations, such as Shor's algorithm \cite{s94}, either to be later approximated by Clifford+$T$ gates or to be performed directly using state distillation. One potential issue with this method is reducing the number of $R_z(2\pi/2^l)$ may increase the number of $R_z(2\pi/2^{l'})$ gates for any $l'<l$, as seen in Example~\ref{ex:opt}. In most cases smaller angles of rotation are more costly so this is a reasonable trade off, but we leave it as an open question to find a general algorithm for optimizing the total cost of all rotation gates in a $\{\cnot, R_z(2\pi/2^k)\}$ circuit.

\subsection{Rotations of odd order}
A natural question is whether rotation gates of other prime power orders admit similar relationships to known codes. To the contrary, we show that for any odd prime $p$ and integer $k$, there are no non-trivial phase polynomials that are zero-everywhere mod $p^k$.

%

\begin{lemma}\label{lem:primepow}
For all odd primes $p$ and non-negative integers $k$, given any non-zero tuple $\a\in\Z_{p^k}^{2^n-1}$, there exists $\x\in\F^n$ such that $$\poly_\a(\x) \neq 0\mod p^k.$$
\end{lemma}

To prove Lemma~\ref{lem:primepow}, we first introduce the \emph{multilinear representation} of a phase polynomial. In particular, given a tuple $\a\in\Z_{p^k}^{2^n-1}$ the multilinear polynomial function defined by $\a$ is given by $$Q_\a(\x)= \sum_{\y\in\FF{n}} a_\y\cdot \mono{x}{y}{n}.$$ The result then follows from two facts:
\begin{enumerate}
	\item there are no non-trivial zero-everywhere multilinear polynomials modulo $\Z_{p^k}$, and
	\item for every multilinear polynomial over $\Z_{p^k}$, there exists a unique equivalent phase polynomial over $\Z_{p^k}$.
\end{enumerate}

The first fact follows from the observation that the set of all non-constant monomial evaluation vectors $$\{\mono{\ev{\fvar}}{y}{n} | \y\in\FF{n}\}$$ is linearly independent over any integer ring. In particular, for any $\y\in\FF{n}$, the vector $\mono{\ev{\fvar}}{y}{n}$ contains a leading $1$ at the $\y$th index (see e.g., Table~\ref{tab:evals}), and hence the set of all such vectors is trivially linearly independent in any integer ring.
\begin{proposition}\label{prop:multzero}
For all odd primes $p$ and non-negative integers $k$, given any non-zero tuple $a\in\Z_{p^k}^{2^n-1}$, there exists $\x\in\F^n$ such that $$Q_\a(\x) \neq 0\mod p^k.$$
\end{proposition}

For the second fact, recall the modular identity $$2xy = x + y - (x\oplus y) \mod p^k$$ for any $x, y\in\F$. Since $2$ is coprime with $p^k$ it has a multiplicative inverse in $\Z_{p^k}$, hence we can rewrite this identity as 
$$
	xy = 2^{-1}x + 2^{-1}y -2^{-1}(x\oplus y) \mod p^k.
$$
The equation above can be used to rewrite a monomial $x_{i_1}x_{i_2}\cdots x_{i_m}$ in the form of a phase polynomial:
\begin{align*}
	(x_{i_1}x_{i_2})\cdots x_{i_m} &= (2^{-1}x_{i_1} + 2^{-1}x_{i_2} - 2^{-1}(x_{i_1} \oplus x_{i_2}))\cdots x_{i_n} \mod p^k \\
		&= 2^{-1}x_{i_1}\cdots x_{i_n} + 2^{-1}x_{i_2}\cdots x_{i_n} - 2^{-1}(x_{i_1} \oplus x_{i_2})\cdots x_n \mod p^k,
\end{align*}
where each term in the second line has degree $m-1$ and hence the monomial can be recursively reduced to the form of a phase polynomial. Uniqueness further follows from Proposition~\ref{prop:multzero}, as if two distinct multilinear polynomials $Q_{\a}$ and $Q_{\b}$ reduced to the same phase polynomial, we would have $Q_{\a-\b}(\x)=0\mod p^k$ for all $\x\in\F^n$ but $\a+\b\neq \0$, a contradiction.

\begin{proposition}\label{prop:endo}
For any odd prime $p$ and positive integer $k$, given a tuple $\a\in\Z_{p^k}^{2^n-1}$ there exists some unique $\b\in\Z_{p^k}^{2^n-1}$ such that for all $\x\in\Z_2^n$,
$$
Q_{\a}(\x) = \poly_{\b}(\x) \mod p^k.
$$
\end{proposition}

Note that Proposition~\ref{prop:endo} \emph{does not} hold for even prime powers $p^k$, as it requires $p^k$ to be coprime with $2$ in order to rewrite a monomial as a weighted sum of parities.

Propositions~\ref{prop:multzero} and \ref{prop:endo} together imply that there exists an isomorphism between multilinear and phase polynomial representations of pseudo-Boolean functions modulo powers of odd primes, and moreover that there are no non-trivial zero-everywhere multilinear polynomials and hence phase polynomials. We formalize this intuition below.

\begin{proof}[Proof of Lemma~\ref{lem:primepow}]
Suppose $\a\in\Z_{p^k}$ is non-zero for some odd prime $p$ and non-negative integer $k$. 
By Proposition~\ref{prop:endo} and the fact that there are the same number of multilinear and phase polynomials over $\Z_{p^k}$, there exists a unique tuple $\b\in\Z_{p^k}^{2^n-1}$ such that $\poly_{\a}(\x)=Q_{\b}(\x)$ for all $\x\in\F^n$. Now by Proposition~\ref{prop:multzero}, there exists $\x\in\F^n$ such that $$\poly_\a(\x) = Q_{\b}(\x) \neq 0\mod p^k$$ as required.
\end{proof}

From the perspective of optimizing phase gates, Lemma~\ref{lem:primepow} asserts that each element of $\Z_{p^k}^{2^n-1}$ corresponds to a unique $n$-qubit unitary implementable over $\CNOT$ and $R(2\pi/p^k)$. Given such a circuit, an implementation minimizing the number of the minimal number of $R(2\pi/p^k)$ may then be obtained by first computing the corresponding element of $\Z_{p^k}^{2^n-1}$ and resynthesizing, which can be performed in polynomial time.

\subsection{Rotations of arbitrary order}

It is worth noting that Lemma~\ref{lem:primepow} above is in a sense the complement to Lemma~\ref{lem:general}. Together, they give a characterization of the linear phase operators with rotation gates that form arbitrary cyclic groups. In particular, the set of phase polynomials which are zero-everywhere mod $d$ for any non-zero $d\in\Z$ is given by scaling the zero-everywhere polynomials for the even part of $d$. 

\begin{theorem}
Let $d$ be any non-negative integer, and suppose the prime factorization of $d$ is $2^{d_1}3^{d_2}5^{d_3}\cdots$. Then $$\C^d=\C^{2^{d_1}}\cdot 3^{d_2}5^{d_3}\cdots.$$
\end{theorem}
\begin{proof}
Th inclusion of $\C^{2^{d_1}}\cdot 3^{d_2}5^{d_3}\cdots$ in $\C^d$ is trivial, so let $\a$ be some tuple in $\Z_d^{2^n-1}$ and suppose $P_{\a}(\x) = 0\mod d$ for all $\x\in\F^n$. 

Clearly $P_{\a}(\x) = 0\mod d$ for all $\x\in\F^n$ if and only if $P_{\a}(\x) = 0\mod p_i^{d_i}$ for all $\x\in\F^n$ and prime power $p_i^{d_i}$ in the prime factor decomposition of $d$. However, by Lemma~\ref{lem:primepow} for any $p\neq 2$, $P_{\a}(\x) = 0\mod p_i^{d_i}$ if and only if $\a=0\mod p_i^{d_i}$, so $\a=\a'\cdot p_i^{d_i}$ where $\a'\in\Z_{d/p_i^{d_i}}$ and $P_{\a'}(\x) = 0\mod d/p_i^{d_i}.$ Repeating for all odd primes, we see that $$\a=\a'\cdot 3^{d_2}5^{d_3}\cdots$$ for some $\a'\in\Z_{2^{d_1}}$ where $\poly_{\a'}(\x) = 0\mod 2^{d_1}$ for all $\x\in\F^n$. Hence $\a'\in\C^{2^{d_1}}$ and so $\a\in\C^{2^{d_1}}\cdot 3^{d_2}5^{d_3}\cdots$ as required.
\end{proof}

\section{Experiments}\label{sec:experiments}

We implemented Algorithm~\ref{alg:cnott} in $T$-par \cite{tpar} as an optimization pass in the resynthesis procedure. $T$-par optimizes circuits over the Clifford+$T$ gate set by computing a representation using exponential sums, then resynthesizing. As our algorithm presently applies to $\cnot$ and phase gates, we break up the input circuit into $\{\cnot, T\}$ subcircuits, each of which is then optimized individually.

We implemented and tested the algorithm with two Reed-Muller decoders -- a majority logic decoder due to Reed~\cite{r54}, and a modern recursive decoder due to Dumer~\cite{d04}. The former has complexity in $O(2^{2n})$ for an $n$-qubit circuit while the latter has a significantly lower complexity of $O(2^n)$. While both of these algorithms are exponential in the number of qubits $n$, we nonetheless obtain reasonable performance for large circuits by storing and operating directly on compressed vector representations. In order to optimize these large circuits we chose relatively fast decoders over minimum-distance decoders.

\subsection{Evaluation}

Algorithm~\ref{alg:cnott} was evaluated on a suite of benchmark quantum circuits, drawn from the literature and the Reversible Logic Benchmarks page \cite{m11}. The majority of circuits tested are reversible circuits, though some specifically quantum circuits were also examined. Toffoli gates were replaced with a Clifford+$T$ implementation using $7$ $T$-gates \cite{ammr13}, and multiple control Toffolis were expanded into two-control Toffoli gates using one zero initialized ancilla (see, e.g., \cite{nc00}).

\begin{table}
\footnotesize
\caption{$T$-count optimization results. $n$ reports the number of qubits in the circuit. $T$-counts are recorded for the original circuit, after optimization by $T$-par, and after optimization by Algorithm~\ref{alg:cnott} with either the majority logic or recursive decoder.}
\label{tab:results}
\centering
\begin{tabular}[t]{lr@{\hspace{1em}}r@{\hspace{1em}}r@{\hspace{1em}}r@{\hspace{1em}}r} \toprule
Benchmark & $n$ & \multicolumn{4}{c}{$T$-count} \\ \cmidrule(l{0pt}r{0pt}){3-6}
& & original & $T$-par & majority & recursive \\
Grover$_5$ \cite{g96}\footnotemark & 9 & 140 & 52 & 52 & 52 \\ 
Mod 5$_{4}$ \cite{m11} & 5 & 28 & 16 & 16 & 16 \\ 
VBE-Adder$_{3}$ \cite{vbe96} & 10 & 70 & 24 & 24 & 24 \\ 
CSLA-MUX$_{3}$ \cite{vi05} & 15 & 70 & 62 & 62 & 58 \\
CSUM-MUX$_{9}$ \cite{vi05} & 30 & 196 & 140 & 84 & 76 \\ 
QCLA-Com$_{7}$ \cite{dkrs06} & 24 & 203 & 95 & 94 & 153 \\
QCLA-Mod$_{7}$ \cite{dkrs06} & 26 & 413 & 249 & 238 & 299 \\
QCLA-Adder$_{10}$ \cite{dkrs06} & 36 & 238 & 162 & -- & 188 \\ 
Adder$_{8}$ \cite{ttk10}& 24 & 399 & 215 & 213 & 249 \\
RC-Adder$_{6}$ \cite{cdkp04}& 14 & 77 & 63 & 47 & 47 \\ 
Mod-Red$_{21}$ \cite{ms12}& 11 & 119 & 73 & 73 & 73 \\
Mod-Mult$_{55}$ \cite{ms12} & 9 & 49 & 37 & 35 & 35 \\ 
Mod-Adder$_{1024}$ \cite{m11} & 28 & 1995 & 1011 & 1011 & 1011 \\ 
BCSD$_2$ \cite{f13} & 9 & 14 & 14 & 2 & 2 \\
BCSD$_4$ \cite{f13} & 14 & 20 & 20 & 4 & 4 \\
BCSD$_8$ \cite{f13} & 21 & 32 & 32 & 8 & 8 \\ 
Cycle $17\_3$ \cite{m11}& 35 & 4739 & 1945 & 1944 & 1982 \\ 
HWB$_6$  \cite{m11} & 7 & 105 & 71 & 75 & 75 \\
HWB$_8$  \cite{m11} & 12 & 5887 & 3551 & 3531 & 3531 \\
$n$th-prime$_6$ \cite{m11} & 9 & 812 & 402 & 400 & 400 \\
$n$th-prime$_8$ \cite{m11} & 12 & 6671 & 4047 & 4045 & 4045 \\ 
$\vdots$ & $\vdots$ & $\vdots$ & $\vdots$ & $\vdots$ & $\vdots$\vspace*{1mm} \\ \bottomrule
\end{tabular} \hfill
\begin{tabular}[t]{lr@{\hspace{1em}}r@{\hspace{1em}}r@{\hspace{1em}}r@{\hspace{1em}}r} \toprule
Benchmark & $n$ & \multicolumn{4}{c}{$T$-count} \\ \cmidrule(l{0pt}r{0pt}){3-6}
& & original & $T$-par & majority & recursive \\
$\vdots$ & $\vdots$ & $\vdots$ & $\vdots$ & $\vdots$ & $\vdots$ \\
GF($2^4$)-Mult \cite{mmcp09} & 12 & 112 & 68 & 68 & 68 \\
GF($2^5$)-Mult \cite{mmcp09} & 15 & 175 & 111 & 111 & 101 \\
GF($2^6$)-Mult \cite{mmcp09} & 18 & 252 & 150 & 150 & 144 \\
GF($2^7$)-Mult \cite{mmcp09} & 21 & 343 & 217 & 217 & 208 \\
GF($2^8$)-Mult \cite{mmcp09} & 24 & 448 & 264 & 264 & 237 \\
GF($2^9$)-Mult \cite{mmcp09} & 27 & 567 & 351 & -- & 301 \\
GF($2^{10}$)-Mult \cite{mmcp09} & 30 & 700 & 410 & -- & 410 \\
GF($2^{16}$)-Mult \cite{mmcp09} & 48 & 1792 & 1040 & -- & --  \\ 
GF($2^{32}$)-Mult \cite{mmcp09} & 96 & 7168 & 4128 & -- & --  \\ 
Hamming$_{15}$ (low)  \cite{m11} & 17 & 161 & 97 & 97 & 97 \\
Hamming$_{15}$ (med)  \cite{m11} & 17 & 574 & 230 & 230 & 230 \\
Hamming$_{15}$ (high)  \cite{m11} & 20 & 2457 & 1019 & 1019 & 1019 \\
QFT$_4$ \cite{nc00} & 5 & 69 & 67 & 67 & 67 \\ 
$\Lambda_3(X)$ -- \cite{bbcdmsssw95} & 5 & 28 & 16 & 16 & 16 \\
\hspace{3.05em} -- \cite{nc00} & 5 & 21 & 15 & 15 & 15 \\
$\Lambda_4(X)$ -- \cite{bbcdmsssw95} & 7 & 56 & 28 & 28 & 28 \\
\hspace{3.05em} -- \cite{nc00} & 7 & 35 & 23 & 23 & 23 \\
$\Lambda_5(X)$ -- \cite{bbcdmsssw95} & 9 & 84 & 40 & 40 & 40 \\
\hspace{3.05em} -- \cite{nc00} & 9 & 49 & 31 & 31 & 31 \\
$\Lambda_{10}(X)$ -- \cite{bbcdmsssw95} & 19 & 224 & 100 & 100 & 100 \\
\hspace{3.05em} -- \cite{nc00} & 19 & 119 & 71 & 71 & 71 \\ \bottomrule
\end{tabular}
\end{table}\footnotetext{Grover's search is performed with 4 iterations using the oracle $f(\x)=\neg x_1\land \neg x_2\land x_3\land x_4\land\neg x_5$.}

Table~\ref{tab:results} reports the $T$-count of circuits optimized with both $T$-par alone, and with Algorithm~\ref{alg:cnott} using either the majority logic or recursive decoder applied to $\{\cnot, T\}$ subcircuits. All experiments were run on with a 2.4GHz quad-core Intel Core i7 processor running Linux and 8GB of RAM. Each benchmark had a timeout of 30 minutes -- instances where the algorithm failed to report a result within the timeout are identified with a dash.

On average, Algorithm~\ref{alg:cnott} reduced $T$-count by 6\% for both the majority logic decoder and the recursive decoder compared to $T$-par. While the recursive decoder produced the best results in some cases, notably the Galois field multipliers, and failed less often, for many benchmarks it reported significantly \emph{increased} $T$-counts compared to $T$-par. Majority logic decoding by comparison typically produced less $T$-reduction, though it consistently resulted in circuits with equal or lesser $T$-count than that reported by $T$-par. Counter-intuitively this appears to result from the recursive decoder actually doing a \emph{better} job optimizing $T$-count -- after the recursive decoder performs significant rewrites on individual $\{\cnot, T\}$ subcircuits, $T$-par has less opportunity to optimize $T$-gates across subcircuit boundaries. A natural direction of future research is to extend decoding-based optimization to $\{H, \cnot, T\}$ circuits in order to make use of the additional $T$-count reductions possible across subcircuit boundaries.

While the $T$-count reductions over $T$-par are minor compared to the initial jump from the original $T$-count, the results clearly demonstrate that further $T$-count optimization beyond the $T$-par algorithm is possible. In the most significant case a $T$-count reduction of 75\% was reported for the benchmark BCSD$_8$ with both decoders, though as the benchmark performs state distillation and relies on certain properties of the circuit for fault tolerance, such an optimization is not likely useful. Note that it may be possible to achieve better $T$-count with other Reed-Muller decoders as well. We leave exploration of effective decoders as an avenue for future work.

As an additional note, while we don't consider $T$-depth optimization in this paper, reductions to $T$-count in some benchmarks allow further reductions to $T$-depth using matroid partitioning. In the extreme case, $T$-depth in CSUM-MUX$_9$ was reduced from 11 to 6 using the recursive decoder, providing strong evidence that reducing $T$-count is an effective means of optimizing $T$-depth.

\section{Conclusion}\label{sec:conclusion}

In this paper we have answered the question previously posed in \cite{amm14} of whether there exist identities which can be used to reduce the $T$-cost of a phase polynomial over $\cnot$ and $T$ gates. We gave a concrete set of generators for the entire set of identities and have shown that, when restricted to $T$-count optimization, these identities correspond exactly to the punctured Reed-Muller code of length $2^n-1$ and order $n-4$. From this correspondence we developed a $T$-count optimization procedure which uses Reed-Muller decoders to reduce the $T$-cost of a phase polynomial and is optimal when a minimum distance decoder is used, as well as gave a new upper bound on the $T$-count of $\{\cnot, T\}$ circuits. We also looked at the question of optimizing phase polynomials corresponding to other $Z$-basis rotation gates, giving a concrete set of generators for the set of identities over rotations of any finite order.

A natural continuation of this programme is to find methods for minimizing the $T$-count of quantum circuits over a universal set of gates -- for instance, the standard Clifford+$T$ set generated by $\{H, \cnot, T\}$. Our methods give both an upper bound of $O(k\cdot n^2)$ $T$-gates for a circuit containing $k$ Hadamard gates, as well as a concrete algorithm which achieves this bound when using a minimum distance decoder. On the other hand, the $(n+k)$-variate phase polynomial for an entire $k$-Hadamard circuit over $\{H, \cnot, T\}$ may be computed and optimized directly \cite{amm14}, giving an upper bound of $O((n+k)^2)$ $T$ gates with the caveat that the resulting operator may not be implementable with only $n$ qubits. In either case the minimal $T$-count depends on the Hadamard cost of the circuit which may itself be reduced, implying that unlike the $\{\cnot, T\}$ case, the minimal $T$-count of a Clifford+$T$ circuit may not be achievable simply by rewriting its phase polynomial. We leave it as a question for future research to determine the relationship between phase polynomials, Hadamard gates and ancillas, as well as upper bounds and methods for finding the exact minimal $T$-count of Clifford+$T$ circuits.

\section*{Acknowledgments}
The authors would like to thank Adam Paetznick and Neil J. Ross for helpful discussions and comments on an earlier version. Matthew Amy is supported in part by the Natural Sciences and Engineering Research
Council of Canada. Michele Mosca is supported by Canada's NSERC, MPrime, CIFAR, and CFI. IQC and Perimeter Institute are supported in part by the Government of Canada and the Province of Ontario.

\bibliographystyle{IEEEtran}
\bibliography{paper}

\appendix

\section{Generators of $\C^{2^k}$}\label{sec:proofmain}

In this section we give an explicit set of generators for the space of zero-everywhere phase polynomials modulo powers of $2$. In particular, we give proofs of Lemma~\ref{lem:genset} and the general version, Lemma~\ref{lem:general}.

\subsection{The monomial basis}

Our proof relies on a connection between the binary evaluations of polynomials over $\Z_8$ and the module $\Z_8^{2^n-1}$. In particular, consider the set of degree at most $n-1$ monomial (Boolean) evaluation vectors $$\{\mono{\ev{\fvar}}{y}{n}\mid\y\in\FO{n}\}.$$ We show that this set of vectors, under the natural inclusion of $\F$ in $\Z_8$, forms a generating set for $\Z_8^{2^n-1}$ -- moreover, since each such vector is linearly independent over $\F^{2^n-1}$ and hence also linearly independent over $\Z_8^{2^n-1}$, this set is in fact a basis. We call this basis the \emph{monomial basis} of $\Z_8^{2^n-1}$.

\begin{lemma}\label{lem:basis}
$\{\mono{\ev{\fvar}}{y}{n}\mid\y\in\FO{n}\}$ is a basis of $\Z_8^{2^n-1}$
\end{lemma}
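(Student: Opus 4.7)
My plan is to establish the claim by a determinant computation on the evaluation matrix, since over the non-field ring $\Z_8$ linear independence alone is not enough to conclude that a set forms a basis. Organize the $2^n-1$ vectors in question as columns of a $(2^n-1)\times(2^n-1)$ matrix $N$ over $\Z_8$, with rows indexed by $\x\in\F^n\setminus\{\0\}$ and columns by $\y\in\F^n\setminus\{\1\}$, so that $N_{\x,\y}=\prod_i x_i^{y_i}$. Since $\Z_8^{2^n-1}$ is free of rank $2^n-1$, it suffices to show that $\det N$ is a unit in $\Z_8$; I will in fact show $\det N=\pm 1$.

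To compute $\det N$, embed it into the full $2^n\times 2^n$ matrix $M$ with entries $M_{\x,\y}=\prod_i x_i^{y_i}$ for all $\x,\y\in\F^n$ (with the convention $0^0=1$). This entry equals $1$ when $\text{supp}(\y)\subseteq\text{supp}(\x)$ and $0$ otherwise, so sorting both rows and columns by Hamming weight (with any fixed tie-breaker) makes $M$ lower triangular with $1$'s on the diagonal. Hence $\det M=1$ over any commutative ring.

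Since $N$ is the minor of $M$ obtained by deleting the row for $\x=\0$ and the column for $\y=\1$, the cofactor/adjugate formula yields $\det N=\pm\det M\cdot(M^{-1})_{\1,\0}=\pm(M^{-1})_{\1,\0}$. To pin down this single entry, observe that the $\0$-column of $M^{-1}$ is precisely the coefficient tuple of the unique polynomial that evaluates to $1$ at $\x=\0$ and to $0$ at every other $\x\in\F^n$. That polynomial is $\prod_{i=1}^n(1-v_i)=\sum_{\y\in\F^n}(-1)^{\wt(\y)}v^\y$, whose $\y=\1$ coefficient is $(-1)^n$. Therefore $\det N=\pm 1$, a unit in $\Z_8$, and the columns of $N$ form a basis.

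The main conceptual hurdle is the opening observation: over $\Z_8$, having $2^n-1$ linearly independent vectors in a rank-$(2^n-1)$ free module does not automatically imply that they span (unlike over a field), so the passage from $\F$-linear independence to a $\Z_8$-basis cannot be made by a cardinality argument alone. The determinant computation cleanly sidesteps this gap, and the only nontrivial combinatorial input is the Möbius-style identity for $\prod_i(1-v_i)$, which identifies the one entry of $M^{-1}$ required.
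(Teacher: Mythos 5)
Your proof is correct, and it takes a genuinely different route from the paper's. The paper first observes that $\{\mono{\v}{y}{n}\mid\y\in\FF{n}\}$ is a basis (the triangular ``leading~$1$'' structure of Table~\ref{tab:evals}), and then reduces the claim to showing that the single vector $\v_1\v_2\cdots\v_n$ lies in the $\Z_8$-span of $\{\mono{\v}{y}{n}\mid\y\in\FO{n}\}$; the key step there is lifting the $\F$-identity $\bigoplus_{\y\in\F^n}\mono{\v}{y}{n}=\0$ to a $\Z_8$-relation $\sum_{\y\in\F^n}\mono{\v}{y}{n}=\a$ with $\res_2(\a)=\0$, isolating the coefficient $b$ of $\v_1\v_2\cdots\v_n$, and arguing by $\F$-linear independence that $b$ must be odd, hence invertible in $\Z_8$. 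You instead compute the determinant of the $(2^n-1)\times(2^n-1)$ evaluation matrix directly, by embedding it as the $(\0,\1)$ minor of the full unitriangular $2^n\times 2^n$ evaluation matrix $M$ and identifying the relevant cofactor as the $v_1\cdots v_n$-coefficient of $\prod_i(1-v_i)$, i.e.\ $(-1)^n$. Both arguments are sound. Yours buys several things: it is quantitative (the determinant is $\pm 1$ over $\Z$, not merely a unit mod $8$, so the statement holds verbatim over $\Z_{2^{k+1}}$ for every $k$, which is exactly what Section~\ref{sec:generalization} needs), and it confronts head-on the subtlety that over $\Z_8$ a linearly independent set of the right cardinality need not span -- a point the paper handles by proving spanning separately rather than by determinant. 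The paper's argument, in exchange, stays entirely within the residue/parity machinery ($\res_2$, $\F$-linear independence) that the rest of Section~\ref{sec:proofmain} is built on, so it requires no linear algebra beyond what is already set up. One small remark: your cofactor step could be bypassed entirely by noting that $N$ itself, with rows and columns each sorted by Hamming weight, is \emph{not} triangular (the deleted row and column have different indices), so the detour through $M^{-1}$ and the M\"obius polynomial $\prod_i(1-v_i)$ is genuinely needed; you supply it correctly.
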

\begin{proof}

We first note that the set of all non-constant monomial evaluation vectors, $\{\mono{\ev{\fvar}}{y}{n}\mid\y\in\FF{n}\}$, is a basis for the module $\Z_8^{2^n-1}$. In particular, for any $\y\in\FF{n}$ the vector $\mono{\ev{\fvar}}{y}{n}$ contains a leading $1$ at the $\y$th index (e.g., Table~\ref{tab:evals}), and hence any tuple of $\Z_8^{2^n-1}$ may be written as a linear combination over this set. It therefore suffices to prove that $\ev{\fvar}_1\ev{\fvar}_2\cdots\ev{\fvar}_n$ is in the span of $\{\mono{\ev{\fvar}}{y}{n}\mid\y\in\FO{n}\}$.

It may be observed that over $\F$, the set of \emph{all} monomial evaluation vectors is linearly dependent, and in particular that $$\bigoplus_{\y\in\F^n}\mono{\ev{\fvar}}{y}{n}=\0$$ since every input evaluates to $1$ for an even number of monomials. Further, as $\res_2$ is homomorphic we have $$\bigoplus_{\y\in\F^n}\mono{\ev{\fvar}}{y}{n}=\res_2\left(\sum_{\y\in\F^n}\mono{\ev{\fvar}}{y}{n}\right)=\0$$ and so $\sum_{\y\in\F^n}\mono{\ev{\fvar}}{y}{n}=\a$ for some $\a\in\Z_8^{2^n-1}$ such that $\res_2(\a)=0$. If we write $\a$ over the basis $\{\mono{\ev{\fvar}}{y}{n}\mid\y\in\FF{n}\}$ and move all instances of $\ev{\fvar}_1\ev{\fvar}_2\cdots\ev{\fvar}_n$ to the left we see $$b\cdot\ev{\fvar}_1\ev{\fvar}_2\cdots\ev{\fvar}_n=\a' - \sum_{\y\in\FO{n}}\mono{\ev{\fvar}}{y}{n}$$ where $\a'$ is in the span of $\{\mono{\ev{\fvar}}{y}{n}\mid\y\in\F^n\setminus\{\0,\1\}\}$ and $b\in\Z_8$. 

Now suppose $b$ is even. Then
\begin{align*}
(b-1)\cdot \ev{\fvar}_1\ev{\fvar}_2\cdots\ev{\fvar}_n &= \a' - \sum_{\y\in\F^n}\mono{\ev{\fvar}}{y}{n} \\
\res_2(\ev{\fvar}_1\ev{\fvar}_2\cdots\ev{\fvar}_n) &= \res_2(\a') + \res_2(\sum_{\y\in\F^n}\mono{\ev{\fvar}}{y}{n}) \\
\ev{\fvar}_1\ev{\fvar}_2\cdots\ev{\fvar}_n &= \res_2(\a').
\end{align*}
Since $\a'$ is in the span of $\{\mono{\ev{\fvar}}{y}{n}\mid\y\in\F^n\setminus\{\0,\1\}\}$ over $\Z_8$, $\res_2(\a')$ is in its span over $\F$ and hence may be written over this basis. However, the set of all monomial evaluation vectors of degree at least 1 is linearly independent over $\F$, so we arrive at a contradiction. 

Thus $b$ is odd and as such has a multiplicative inverse in $\Z_8$. Hence, $$\ev{\fvar}_1\ev{\fvar}_2\cdots\ev{\fvar}_n=b\cdot b^{-1}\cdot\ev{\fvar}_1\ev{\fvar}_2\cdots\ev{\fvar}_n=b^{-1}\cdot\left(\a' - \sum_{\y\in\FO{n}}\mono{\ev{\fvar}}{y}{n}\right).$$

\end{proof}

Lemma~\ref{lem:basis} tells us that any element $\a$ of $\Z_8^{2^n-1}$ is the vector of evaluations for some pseudo-Boolean polynomial function $f:\F^n\rightarrow\Z_8$ where $f(\fvar_1, \fvar_2,\dots, \fvar_n)=\sum_{\y\in\FO{n}}b_\y \mono{\fvar}{y}{n},$ and hence $\ev{f}=\a=\sum_{\y\in\FO{n}}b_\y \mono{\ev{\fvar}}{y}{n}$ in the monomial basis. Moreover, since $$\res_2(\a) = \sum_{\y\in\FO{n}}\res_2(b_\y) \mono{\ev{\fvar}}{y}{n},$$ $\res_2(\a)$ is the evaluation vector of a Boolean polynomial function with degree at most $\deg(f)$.

\subsection{Evaluating $\poly_\a$}\label{sec:eval}

The next step in our proof is to give an analytic formula for the value of a phase function $\poly_\a$ applied to a vector $\x\in\F^n$ as a function of the degree of the polynomial form of $\a$. Specifically, we show that $\poly_\a(\x)$ is equal to a linear combination of the Hamming weights -- numbers of solutions -- of certain Boolean polynomials arising from the multiplication of a monomial with a degree $1$ polynomial.

Consider the value of a phase polynomial $\poly_\a$ at $\x\in\F^n$: $$\poly_\a(\x)=\sum_{\y\in\FF{n}} a_\y(\xor{y}{x}{n}).$$ We can view the above equation as an inner product, since the value $\xor{y}{x}{n}$ is the $\y$th component of the evaluation vector $\xor{x}{\ev{\fvar}}{n}$. 

Formally, we define $\langle \a, \b\rangle$ for $\a,\b\in\Z_8^{2^n-1}$ as $\sum_{i=1}^{2^n-1}a_ib_i$. Note that $$\langle\a+\b, \c\rangle=\langle\a, \c\rangle+\langle\b,\c\rangle$$ for any $\a,\b,\c\in\Z_8^{2^n-1}$ since the inner product is linear in either argument over $\Z$, and hence also $\Z_8$. Using this observation, we give an explicit formula for $\poly_\a(\x)$ as a function of the basis vectors appearing in $\a$:

\begin{lemma}\label{lem:eval}
Let $\normalfont \a\in\Z_8^{2^n}$ and suppose $\normalfont \a=\sum_{\y\in\FO{n}}b_\y \mono{\ev{\fvar}}{y}{n}$ in the monomial basis. Then $$\normalfont \poly_\a(\x)=\sum_{\y\in\FO{n}}b_\y \wt{(\mono{\ev{\fvar}}{y}{n})(x_1\ev{\fvar}_1\oplus x_2\ev{\fvar}_2\oplus\cdots x_n\ev{\fvar}_n)}.$$
\end{lemma}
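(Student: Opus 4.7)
The plan is to interpret $\poly_\a(\x)$ as a $\Z_8$-valued inner product and then exploit its bilinearity together with the monomial basis decomposition of $\a$ guaranteed by Lemma~\ref{lem:basis}. Each step is a direct manipulation of inner products; there is no case analysis or combinatorics beyond recognizing that the inner product of two $\{0,1\}$-valued vectors equals the Hamming weight of their componentwise product.

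The first step is to rewrite the definition $\poly_\a(\x) = \sum_{\y\in\FF{n}} a_\y (\xor{y}{x}{n})$ as $\poly_\a(\x) = \langle \a,\, x_1\v_1\oplus x_2\v_2\oplus\cdots\oplus x_n\v_n\rangle$. The key observation is that the $\y$th coordinate of the Boolean vector $x_1\v_1\oplus\cdots\oplus x_n\v_n$ is precisely $\xor{x}{y}{n}$, since by definition the $\y$th coordinate of $\v_i$ is $y_i$. Coercing this Boolean vector into $\Z_8^{2^n-1}$ via $\iota$ and expanding the inner product termwise recovers the defining formula for $\poly_\a(\x)$.

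The second step substitutes the monomial basis decomposition $\a = \sum_{\y\in\FO{n}} b_\y \mono{\v}{y}{n}$ and uses the $\Z_8$-bilinearity of $\langle\cdot,\cdot\rangle$ noted in the discussion preceding the lemma to obtain
$$\poly_\a(\x) = \sum_{\y\in\FO{n}} b_\y \,\langle \mono{\v}{y}{n},\, x_1\v_1\oplus\cdots\oplus x_n\v_n\rangle.$$
The final step evaluates each remaining inner product. Both factors are $\{0,1\}$-valued, so the integer sum $\sum_i (\mono{\v}{y}{n})_i(x_1\v_1\oplus\cdots\oplus x_n\v_n)_i$ simply counts the coordinates on which both vectors are simultaneously $1$. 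This count equals $\wt\bigl((\mono{\v}{y}{n})(x_1\v_1\oplus\cdots\oplus x_n\v_n)\bigr)$, where the product inside the weight is the componentwise product of Boolean evaluation vectors.

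I do not expect a substantive obstacle; the lemma is essentially bookkeeping. The one subtlety worth verifying is that arithmetic is tracked correctly across the three rings in play ($\F$, $\Z$, and $\Z_8$). Concretely, one must confirm that multiplying the $\Z$-valued Hamming weight by $b_\y\in\Z_8$ via the usual $\Z$-action on $\Z_8$ agrees with performing the inner product directly in $\Z_8$. Since all vector entries lie in $\{0,1\}\subset\Z_8$, the integer sum and its reduction modulo $8$ coincide, so bilinearity applies without issue and the identity of the lemma follows.
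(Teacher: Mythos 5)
Your proposal is correct and follows essentially the same route as the paper's proof: rewrite $\poly_\a(\x)$ as the inner product $\langle\a,\,\xor{x}{\v}{n}\rangle$, expand $\a$ over the monomial basis using linearity of the inner product, and identify each resulting inner product of $\{0,1\}$-valued vectors with the Hamming weight of their componentwise product. Your closing remark on checking that the $\Z$-valued weight interacts correctly with $\Z_8$ scalars is a point the paper leaves implicit, but it does not change the argument.
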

\begin{proof}
\begin{align*}
\poly_\a(\x) &= \sum_{\y\in\FF{n}} a_\y(\xor{y}{x}{n}) \\
	&= \sum_{\y\in\FF{n}} a_\y(\xor{x}{\ev{\fvar}}{n})_\y \\
	&= \langle\a,  \xor{x}{\ev{\fvar}}{n}\rangle \\
	&= \sum_{\y\in\FO{n}}b_\y \langle \mono{\ev{\fvar}}{y}{n},  \xor{x}{\ev{\fvar}}{n}\rangle \\
	&= \sum_{\y\in\FO{n}}b_\y \wt{(\mono{\ev{\fvar}}{y}{n})(\xor{x}{\ev{\fvar}}{n})}.
\end{align*}
\end{proof}

The value of $\wt{(\mono{\ev{\fvar}}{y}{n})(\xor{x}{\ev{\fvar}}{n})}$ in Lemma~\ref{lem:eval} above may be restated as the number of solutions to the equation $(\mono{\fvar}{y}{n})(\xor{x}{\fvar}{n})=1$. Fortunately, this number of a simple function of the degree of the polynomial, as the following Lemma shows.

\begin{lemma}\label{lem:weight}
For any $\x, \y\in\F^n$, $$\wt{(\mono{\ev{\fvar}}{y}{n})(\xor{x}{\ev{\fvar}}{n})} = 2^{n-\deg((\mono{\fvar}{y}{n})(\xor{x}{\fvar}{n}))}$$
if $(\mono{\fvar}{y}{n})(\xor{x}{\fvar}{n})\neq 0$, or $0$ otherwise.
\end{lemma}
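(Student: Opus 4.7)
The plan is to view $f := (\mono{v}{y}{n})(\xor{x}{v}{n})$ as a product $g \cdot h$ and count its $1$-inputs by first restricting to the support of $g$. Let $T := \{i : y_i = 1\}$, so $g(\a) = 1$ precisely when $a_i = 1$ for every $i \in T$. On this slice there are $2^{n - \wt(\y)}$ assignments, parameterized freely by the coordinates $\{a_i : i \notin T\}$, and $f(\a) = 1$ on the slice iff $h(\a) = 1$ there. Substituting $a_i = 1$ for $i \in T$ into $h$ yields an affine $\F$-linear function
\[
h(\a)\bigl|_{a_i = 1,\, i\in T} \;=\; c \,\oplus\, \bigoplus_{i\notin T} x_i a_i, \qquad c := \bigoplus_{i\in T} x_i,
\]
so counting $1$-inputs of $f$ is reduced to counting points of $\F^{n-\wt(\y)}$ where a single affine functional evaluates to $1$.

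This last count splits into three elementary subcases. If some $x_i$ with $i \notin T$ is nonzero, the linear part is nontrivial, so exactly half of the $2^{n - \wt(\y)}$ slice assignments work, giving a total weight of $2^{n - \wt(\y) - 1}$. If instead every $x_i$ with $i \notin T$ vanishes, the equation collapses to $c = 1$, which contributes $2^{n - \wt(\y)}$ if $c = 1$ and $0$ if $c = 0$. These three numerical cases are what the lemma must reproduce.

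Next I match each subcase with $\deg(f)$, computed as a multilinear polynomial via $v_i^2 = v_i$. Expanding directly,
\[
g \cdot h \;=\; c\,g \;+\; \sum_{i\notin T} x_i\, g v_i,
\]
and the monomials $\{g\} \cup \{g v_i : i \notin T\}$ are distinct and $\F$-linearly independent in the multilinear quotient. Hence $f = 0$ exactly when $c = 0$ and $x_i = 0$ for all $i \notin T$, precisely the subcase where the weight vanishes. When the linear part vanishes but $c = 1$, we have $f = g$ of degree $\wt(\y)$, and the weight $2^{n - \wt(\y)} = 2^{n - \deg(f)}$ matches. When some $x_i$ with $i \notin T$ is nonzero, $f$ acquires a $g v_i$ term of degree $\wt(\y) + 1$, so $\deg(f) = \wt(\y) + 1$ and the weight $2^{n - \wt(\y) - 1} = 2^{n - \deg(f)}$ again matches. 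A brief bookkeeping remark is needed to justify ignoring the excluded input $\0$: if $\y \neq \0$ then $g(\0) = 0$ so $f(\0) = 0$ automatically, and in the degenerate case $\y = \0$ one has $f = h$ with $h(\0) = 0$, so in either event the length $2^n - 1$ and length $2^n$ weights coincide.

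I expect the main (mild) obstacle to be exactly this case-checking: aligning the algebraic condition ``$f = 0$ in the multilinear quotient'' with the combinatorial condition ``no solutions on the slice,'' and verifying that the exponent $n - \deg(f)$ drops by one precisely when the linear correction $\sum_{i \notin T} x_i g v_i$ is present. Once the decomposition $g \cdot h = c g + \sum_{i\notin T} x_i g v_i$ is in hand, both points are immediate from the linear independence of distinct multilinear monomials, and the remaining arithmetic is a two-line tally.
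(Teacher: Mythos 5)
Your proof is correct and follows essentially the same route as the paper's: restrict to the slice where the monomial is $1$, reduce the linear form to an affine functional in the remaining variables, and split into the three cases (zero, constant, nontrivial linear part) while matching each against $\deg(f)$ via the multilinear expansion. Your treatment is in fact slightly more careful than the paper's, since you explicitly verify $f(\0)=0$ so that the punctured (length $2^n-1$) weight agrees with the full count, and you handle the degenerate case $\y=\0$.
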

\begin{proof}

Clearly if $(\mono{\fvar}{y}{n})(\xor{x}{\fvar}{n}) = 0$, then $\wt{(\mono{\ev{\fvar}}{y}{n})(\xor{x}{\ev{\fvar}}{n})} = 0$ as required, so suppose instead that $(\mono{\fvar}{y}{n})(\xor{x}{\fvar}{n})\neq 0$. Since $\xor{x}{\fvar}{n}$ has a degree of $1$, $$\deg((\mono{\fvar}{y}{n})(\xor{x}{\fvar}{n}))=\deg(\mono{\fvar}{y}{n})\text{, or }\deg(\mono{\fvar}{y}{n})+1.$$

Consider the former case. Clearly $\mono{\fvar}{y}{n}$ is a linear combination involving only variables present in $\mono{\fvar}{y}{n}$. Then by the equivalence $\fvar_i^2=\fvar_i$ for any $i$, $$(\mono{\fvar}{y}{n})(\xor{x}{\fvar}{n})=(\wt{\x} \mod 2)\cdot\mono{\fvar}{y}{n}.$$ Since $(\mono{\fvar}{y}{n})(\xor{x}{\fvar}{n})\neq 0$, it must be the case that $\wt{\x}=1\mod 2$. As $\fvar_{i_1}\fvar_{i_2}\cdots \fvar_{i_j}=1$ exactly when $\fvar_{i_1}=\fvar_{i_2}=\cdots = \fvar_{i_j}=1$, we see that $\mono{\fvar}{y}{n}=1$ has $2^{n-\wt{\y}}$ solutions, and hence $$\wt{\mono{\ev{\fvar}}{y}{n}}=2^{n-\deg(\mono{\fvar}{y}{n})}.$$

Now consider the latter case, $\deg((\mono{\fvar}{y}{n})(\xor{x}{\fvar}{n}))=\deg(\mono{\fvar}{y}{n})+1$. We know the linear combination $\xor{x}{\fvar}{n}$ must contain a non-zero multiple of some variable not in the monomial $\mono{\fvar}{y}{n}$. Without loss of generality assume $\mono{\fvar}{y}{n}$ and $\xor{x}{\fvar}{n}$ have no variables in common, as otherwise we may write $$(\mono{\fvar}{y}{n})(\xor{x}{\fvar}{n})=(\mono{\fvar}{y}{n})(c\oplus\xor{x'}{\fvar}{n})$$ for some $c\in\F$ such that $\xor{x'}{\fvar}{n}$ involves none of the variables in $\mono{\fvar}{y}{n}$. Since $c\oplus\xor{x'}{\fvar}{n}=1$ for exactly half of the $2^{n-\wt{\y}}$ valuations where $\mono{\fvar}{y}{n}=1$, we see that there are $2^{n-\wt{\y}-1}$ solutions, hence $$\wt{(\mono{\ev{\fvar}}{y}{n})(\xor{x}{\ev{\fvar}}{n})} = 2^{n-\deg(\mono{\fvar}{y}{n}) - 1}$$ as required.
\end{proof}

In general, it is not the case that $\wt{\ev{f}}=2^{n-\deg(f)}$ for an $n$-variate Boolean polynomial function $f$. In particular, consider $f(\fvar_1, \fvar_2,\dots, \fvar_n)=1\oplus \fvar_1\fvar_2\cdots \fvar_i$. Since $\fvar_1\fvar_2\cdots \fvar_i=1$ has $2^{n-i}$ solutions, $$\wt{\ev{f}}=2^n-2^{n-i}\neq 2^{n-\deg(f)}.$$

\subsection{An explicit set of generators}\label{sec:slice}

From Lemma~\ref{lem:weight} it's immediate that if $\a\in\Z_8^{2^n-1}$ may be written over the monomial basis with degree at most $n-4$, then $\poly_\a(\x) = 0\mod 8$ for any $\x$ and so $\a\in\C$. However, it may be the case that $\a$ contains monomials with degree greater than $n-4$ and yet are still in $\C$. For instance, consider $\a = 2\cdot\ev{\fvar}_1\ev{\fvar}_2\cdots\ev{\fvar}_{n-3}$. For any $\x\in\F^n$, 
\begin{align*}
\poly_\a(\x) 
	&= 2\cdot\wt{(\ev{\fvar}_1\ev{\fvar}_2\cdots\ev{\fvar}_{n-3})(\xor{x}{\ev{\fvar}}{n})} \\
	&= 2\cdot 2^{n-\deg((\fvar_1\fvar_2\cdots \fvar_{n-3})(\xor{x}{\fvar}{n}))} \\
	&=0\mod 8.
\end{align*}
In this case we have $\a\in\C$ and $\res_2(\a) = \0\in\R(n-4, n)^*$ for $n\geq 4$ even though as a polynomial over $\Z_8$, $\a$ has degree greater than $n-4$. On the other hand, in some sense, with regard to Lemma~\ref{lem:weight}, the term $2\cdot \fvar_1\fvar_2\cdots \fvar_{n-3}=2^1\cdot \fvar_1\fvar_2\cdots \fvar_{n-3}$ has an \emph{effective} degree of $\deg(\fvar_1\fvar_2\cdots \fvar_{n-3})-1=n-4$, since $2\cdot\wt{(\ev{\fvar}_1\ev{\fvar}_2\cdots\ev{\fvar}_{n-3})(\xor{x}{\ev{\fvar}}{n})} = 0$, $2^{n-(n - 4)}=2^4$ or $2^{n-(n-3)}=2^3$. 

We define the effective degree of a term of the form $2^i\cdot\mono{\fvar}{y}{n}$ to be $\wt{\y} - i$. Moreover, we let the effective degree of a polynomial sum $\sum_{\y\in\FO{n}}b_\y\mono{\fvar}{y}{n}$ be the maximum effective degree of any term obtained by expanding each coefficient $b_\y$ to its binary representation, $$b_\y=(b_\y)_02^0 + (b_\y)_12^1 + (b_\y)_22^2.$$ As we prove below, the phase polynomial associated with a tuple $\a\in\Z_8^{2^n-1}$ necessarily evaluates to a non-zero value mod $2^k$ for some input if $\a=\sum_{\y\in\FO{n}}b_\y\mono{\fvar}{y}{n}$ has effective degree $n-k$.

\begin{lemma}\label{lem:idk}
Let $\a\in\Z_8^{2^n-1}$ have effective degree $n-k$ in the monomial basis. There exists $\x\in\F^n$ such that $$\poly_\a(\x) \neq 0 \mod 2^k.$$
\end{lemma}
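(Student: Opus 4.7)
My plan is to expand $\poly_\a(\x)$ via Lemmas~\ref{lem:eval} and~\ref{lem:weight}, reduce modulo $2^k$ to isolate the contributions of terms of effective degree exactly $n-k$, and then exhibit an $\x$ on which the resulting sum is nonzero by a short parity argument.

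First I would write $\a = \sum_{\y \in \FO{n}} b_\y \mono{\v}{y}{n}$ in the monomial basis and expand each coefficient in binary, $b_\y = (b_\y)_0 + 2(b_\y)_1 + 4(b_\y)_2$. Combining Lemmas~\ref{lem:eval} and~\ref{lem:weight} shows that each pair $(i,\y)$ with $(b_\y)_i = 1$ contributes to $\poly_\a(\x)$ either $0$, $2^{n-e_{i,\y}}$, or $2^{n-e_{i,\y}-1}$, depending on whether the product $(\mono{v}{y}{n})(\xor{x}{v}{n})$ vanishes, has degree $\wt(\y)$, or has degree $\wt(\y)+1$, where $e_{i,\y} = \wt(\y)-i$ is the effective degree of that term. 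By the hypothesis that $\a$ has effective degree $n-k$, every contribution with $e_{i,\y} < n-k$ is divisible by $2^k$ and so vanishes modulo $2^k$, while each surviving contribution (with $e_{i,\y} = n-k$) reduces modulo $2^k$ to $0$ or $2^{k-1}$; inspecting the proof of Lemma~\ref{lem:weight}, the value $2^{k-1}$ arises precisely when $\x$ has a nonzero coordinate outside $\text{supp}(\y) = \{j : y_j = 1\}$. Letting $S \subseteq \FO{n}$ be the set of $\y$ whose unique index $i = \wt(\y)-(n-k) \in \{0,1,2\}$ satisfies $(b_\y)_i = 1$, this analysis would yield
\begin{equation*}
\poly_\a(\x) \equiv 2^{k-1}\bigl|\{\y \in S : \text{supp}(\x)\not\subseteq\text{supp}(\y)\}\bigr| \pmod{2^k},
\end{equation*}
with $|S| \geq 1$ forced by the hypothesis on effective degree.

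It then remains to pick $\x$ making the cardinality above odd, which I would handle by a case split on the parity of $|S|$. If $|S|$ is odd I would take $\x = \1$: every $\y \in \FO{n}$ satisfies $\y \neq \1$, so $\text{supp}(\1) = \{1,\dots,n\} \not\subseteq \text{supp}(\y)$ for each $\y \in S$, giving count $|S|$. If $|S|$ is even I would pick $\y^* \in S$ of maximum Hamming weight and take $\x = \y^*$; any other $\y \in S$ has $|\text{supp}(\y)| \leq |\text{supp}(\y^*)|$ together with $\text{supp}(\y) \neq \text{supp}(\y^*)$, forcing $\text{supp}(\y^*) \not\subseteq \text{supp}(\y)$, whereas $\text{supp}(\y^*) \subseteq \text{supp}(\y^*)$ itself. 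The count is then $|S|-1$, which is odd. In either case $\poly_\a(\x) \equiv 2^{k-1} \not\equiv 0 \pmod{2^k}$.

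The main obstacle will be the careful mod-$2^k$ bookkeeping in the first step, where one has to track the binary expansion of each $b_\y$ against the two possible product degrees in Lemma~\ref{lem:weight} and confirm that everything of effective degree strictly below $n-k$ falls away; once this reduction is in place, the parity argument is essentially immediate. A minor sanity check is that the weight-maximal $\y^* \in S$ automatically lies in $\FO{n}$, since $\wt(\y^*) = n$ would force $\y^* = \1$, which is excluded from the monomial basis.
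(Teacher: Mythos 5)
Your proof is correct, and its first half --- reducing $\poly_\a(\x)$ modulo $2^k$ so that only the effective-degree-$(n-k)$ terms survive, each contributing $2^{k-1}$ exactly when $\mathrm{supp}(\x)\not\subseteq\mathrm{supp}(\y)$ and $0$ otherwise --- is the same reduction the paper performs by combining Lemmas~\ref{lem:eval} and~\ref{lem:weight}. Where you genuinely diverge is the combinatorial finish. The paper argues by contradiction: assuming the count is even for every $\x$, it defines $S_i$ to be the set of effective-degree-$(n-k)$ terms missing the variable $v_i$, observes that $|\cup_{i:x_i=1}S_i|$ is exactly your count, and uses inclusion--exclusion to conclude that the intersection $\cap_{i:y_i=0}S_i$ taken over a term of \emph{minimal} actual degree --- which is a singleton --- would have even cardinality. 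You instead exhibit the witness $\x$ directly via a parity split: $\x=\1$ when $|S|$ is odd (every $\y\in S$ omits some variable since $\y\neq\1$, so the count is $|S|$), and $\x=\y^*$ for a weight-\emph{maximal} $\y^*\in S$ when $|S|$ is even (only $\y^*$ itself passes the containment test, so the count is $|S|-1$). The two extremal choices are dual and rest on the same support argument, but your version is constructive --- it actually produces an input on which the phase polynomial is nonzero, which could matter algorithmically --- and it avoids the inclusion--exclusion bookkeeping, which is the least transparent step of the paper's proof. The one point to make explicit, which you handle only implicitly, is that in the even case $|S|\geq 2$, so $\y^*\neq\0$ and $\mathrm{supp}(\y^*)$ is nonempty; otherwise $\x=\y^*=\0$ would give count $0$.
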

\begin{proof}
Suppose to the contrary that $\poly_\a(\x) = 0\mod 2^{k}$ for all $\x\in\F^n$. Consider a term $2^i\cdot\mono{\fvar}{y}{n}$ having effective degree $n-l$ for some $l$ greater than or equal to $k$, hence the effective degree is at most $n-k$. By Lemmas \ref{lem:eval} and \ref{lem:weight}, $$\poly_{2^i\cdot\mono{\ev{\fvar}}{y}{n}}(\x)=2^{l}\text{ or } 2^{l - 1},$$ and since $l\geq k$, the result is non-zero mod $2^k$ if and only if $l=k$ (i.e., the effective degree is $n-k$) and $\poly_{2^i\cdot\mono{\ev{\fvar}}{y}{n}}(\x)=2^{l - 1}$. From Lemma \ref{lem:weight} we know that this is the case exactly if $$\deg((\mono{\fvar}{y}{n})(\xor{x}{\fvar}{n}))=\deg(\mono{\fvar}{y}{n})+1,$$ or equivalently the sum $\xor{x}{\fvar}{n}$ contains a non-zero multiple of some variable \emph{not} present in the monomial $\mono{\fvar}{y}{n}$. It can then be observed that for any input $\x\in\F^n$, if $\poly_\a(\x)=0\mod 2^k$, there must be an even number of terms with effective degree $n-k$ that do not contain some variable in the sum $\xor{x}{\fvar}{n}$. We show that this is impossible for all $\x\in\F^n$ by an inclusion--exclusion argument.

We define $S_i$ to be the set of all effective degree $n-k$ terms of $\a$ that do not contain the variable $\fvar_i$. Clearly $\cup_{i|x_i=1} S_i$ gives the set of all such terms that do not contain some variable in the sum $\xor{x}{\fvar}{n}$. Moreover, $|\cup_{i|x_i=1} S_i|$ gives the number of such terms. By the assumption and the observation above that $\poly_\a(\x)=0\mod 2^k$ if and only if there are an even number of terms in $\a$ with effective degree $n-k$ that do not contain some variable in the sum $\xor{x}{\fvar}{n}$, it follows that for any $\x\in\F^n$, $$|\cup_{\fvar_i\in A} S_i| = 0\mod 2.$$

Now take some term $2^i\cdot\mono{\fvar}{y}{n}$ of effective degree $n-k$ in $\a$ but with minimal (actual) degree -- that is, the degree of the monomial part, $\mono{\fvar}{y}{n}$. Since $\mono{\fvar}{y}{n}$ has minimal degree, every other term of effective degree $n-k$ must contain some variable $\fvar_i$ for which $y_i=0$. Hence, the intersection of $S_i$ over all indexes for which $\fvar_i$ is not in the monomial $\mono{\fvar}{y}{n}$ contains exactly one term, $$\cap_{i|y_i=0} S_i = \{2^i\cdot\mono{\fvar}{y}{n}\}.$$ Now, $|\cap_{y_i=0} S_i|$ can be written as a sum of cardinalities of unions of the sets $S_i$: $$|\cap_{i|y_i=0} S_i|=\sum_{\x\in\F^n} s_\x|\cup_{i|x_i=1} S_i|$$ for some integers $s_\x$. For instance, $|S_1\cap S_2| = |S_1| + |S_2| - |S_1\cup S_2|$. However, since $|\cup_{i|x_i=1} S_i| = 0\mod 2$ for any $\x$, we have $|\cap_{i|y_i=0} S_i|=0\mod 2$, a contradiction. Thus there exists $\x\in\F^n$ such that $\poly_\a(\x) \neq 0 \mod 2^k$.
\end{proof}

Lemma~\ref{lem:idk} suffices to prove that $\C$ is generated by the set of scaled monomial vectors of effective degree at most $n-4$. As a consequence we obtain not only a $T$-count optimization procedure for $\{\cnot, T\}$ circuits, but also fully characterize the set of diagonal unitaries implementable over $\{\cnot, T\}$, in the sense that $$\P_8(n)\simeq \Z_8^{2^n-1}/\C.$$

\noindent\textbf{Lemma~\ref{lem:genset}.} \emph{$\C$ is generated by $\{2^i\mono{\ev{\fvar}}{\y}{n} \mid \y\in\F^n, i\in\Z_3, \wt{\y} - i \leq n-4\}$.}
\begin{proof}
Suppose $\c\in\C$. Then $\poly_\c(\x) = 0\mod 8$ for all $\x\in\F^n$, hence by Lemma~\ref{lem:idk}, $\c$ must have effective degree at most $n-4$ and can be written as a sum of the above generators.

Now consider some generator $\c=2^i\mono{\ev{\fvar}}{y}{n}$ where $\wt{\y} - i \leq n-4$. By Lemmas \ref{lem:eval} and \ref{lem:weight}, $$\poly_\c(\x) = 2^{i+n-\wt{\y}}$$ for any $\x\in\F^n$. Since $i+n-\wt{\y} \geq 4$ we have $\poly_\c(\x) = 0\mod 8$ so $\c\in\C$. Moreover since $\C$ is a group, every sum of terms $2^i\mono{\ev{\fvar}}{y}{n}$ where $\wt{\y} - i \leq n-4$ is contained in $\C$, hence $\C$ is generated by $\{2^i\mono{\ev{\fvar}}{\y}{n} \mid \y\in\F^n, \wt{\y} - i \leq n-4\}.$
\end{proof}

\noindent\textbf{Lemma~\ref{lem:general}.} \emph{$\C^{2^k}$ is generated by $\{2^i\mono{\ev{\fvar}}{\y}{n} \mid \y\in\F^n, i\in\Z_3, \wt{\y} - i \leq n-k-1\}$.}
\begin{proof}
Similar to Lemma~\ref{lem:genset}.
\end{proof}

\end{document}